\newtheorem{lemma}{Lemma}
\newtheorem{theorem}{Theorem}
\newtheorem{proposition}{Proposition}
\newcommand\independent{\protect\mathpalette{\protect\independenT}{\perp}}
\def\independenT#1#2{\mathrel{\rlap{$#1#2$}\mkern2mu{#1#2}}}
\newcommand{\Hv}{\mbox{\boldmath $ H $}}
\newcommand{\pigrecov}{\mbox{\boldmath $ \pi $} }
\begin{document}

$ \, $

\bigskip

$ \, $

\centerline{\Large \noindent {\bf Estimation of distributional effects  of treatment and control}}

\centerline{\Large \noindent {\bf under selection on observables:}}

\centerline{\Large \noindent {\bf consistency, weak convergence, and applications}}

\bigskip

\centerline{Pier Luigi Conti\footnote{{Pier
Luigi Conti. Dipartimento di Scienze
Statistiche; Sapienza Universit\`{a} di Roma; P.le A.
Moro, 5; 00185 Roma; Italy. E-mail
pierluigi.conti@uniroma1.it}}}

\centerline{Livia De Giovanni\footnote{{Livia De Giovanni.
Dipartimento di Scienze
Polituiche; LUISS Guido Carli; Viale Romania, 32; 00197 Roma; Italy. E-mail
ldegiovanni@luiss.it}}}




\bigskip

\centerline{\noindent {\bf Abstract}}

In this paper the estimation of the distribution function for potential outcomes to receiving or not receiving a treatment is studied. The approach is based on weighting observed data on the basis on the estimated propensity score. A weighted version of the empirical process is constructed and its weak convergence to bivariate Gaussian process is established. Results for the estimation of the Average Treatment Effect (ATE) and Quantile Treatment Effect (QTE) are obtained as by-products. Applications to the construction of nonparametric tests for the treatment effect and for the stochastic dominance of the treatment over control are considered, and their finite sample properties and merits are studied via simulation.

\bigskip

\noindent
{\bf Keywords}. Potential outcomes, Propensity score, causality, Empirical processes, weak convergence, nonparametric tests, stochastic dominance.

\bigskip

\newpage

\section{Introduction}
\label{sec:introduction}

The evaluation of the possible effects of a treatment on an outcome plays a central role in theoretical as well as applied statistical and econometrical literature; cfr. the excellent review papers by  \cite{AtheyImbens2017} and \cite{imbensjel}. The main quantity of interest, traditionally, is the average effect of the treatment on outcome, or better the difference between the expected valued of outcomes for treated and control (untreated) subjects, {\em i.e.} $ATE$ (Average Treatment Effect). Another quantity of interest is
the effects of treatment on outcome quantiles, which is summarized by $QTE$ (Quantile Treatment Effect).
The main source of difficulty is that data are usually observational, so that the estimation of the treatment effect by simply comparing outcomes for
treated vs. control subjects is prone to a relevant source of bias: receiving a treatment is not a ``purely random'' event, and there could be relevant
differences between treated and control subjects. This motivates the need to account for confounding covariates.

In the literature, several different techniques have been proposed to estimate $ATE$, under various assumptions (see \cite{AtheyImbens2017}, \cite{imbensjel} and references therein).
As far as $QTE$ is concerned, cfr. the paper by \cite{firpo07}. The problem of evaluating possible differences in the distribution function of potential outcomes with binary instrumental variables is studied in \cite{Abadie2002} via a Kolmogorv-Smirnov type test.

In the present paper we essentially focus on evaluating the possible effects of the treatment on the whole outcome probability distribution.
The starting point is to use outcome weighting similar to those introduced in \cite{HIR03} and \cite{firpo07}. Using this approach, estimates of the distribution function (d.f.)
for treated and control subjects will be obtained. Such estimators essentially play a role similar to the empirical d.f. in nonparametric statistics. It will be shown that the resulting  ``empirical processes'' weakly converge to an appropriate Gaussian process. Although it is non a Brownian bridge, it possesses
several properties  similar to the Brownian bridge (continuity of trajectories, etc.). These theoretical results are applied to the construction of confidence bands for the outcome distribution under treatment and under control, as well as to construct a new statistical test to compare treated and untreated subjects. In a sense, such a test is
a version of the classical Wilcoxon-Mann-Whitney test for two groups comparison. Its main merit is to capture the possible difference between treated and untreated subjects even when $ATE$ is equal to zero. Another application of interest will be the construction of a test
for stochastic dominance of treatment w.r.t. control, which is of interest, for instance, in programme evaluation exercises (\cite{linton05}), welfare outcome, etc..

The paper is organized as follows. In Section $\ref{sec:problem}$ the problem is described. In Section $\ref{sec:F1F0asymp}$  the main asymptotic large sample results are provided, and in Section $\ref{subsampling_general}$ approximations based on subsampling are considered. Particularizations to $ATE$ and
$QTE$ are given in Section $\ref{ATEandQTE}$. Section $\ref{sec:CB}$ is devoted to the construction of confidence bands for the d.f. of outcomes, for both
treated and untreated subjects. In Section $\ref{sec:Wilcoxon}$ a Wilcoxon-type statistic to test for treatment effect of the d.f of outcomes in introduced, and
in Section $\ref{sec:dominance}$ an elementary test for first-order stochastic dominance of treated {\em vs.} untreated is studied. The finite sample
performance of the proposed methodologies is studied {\em via} Monte Carlo simulation in Section $\ref{sec:simulation}$.

\section{The problem}
\label{sec:problem}

Let $Y$ be an outcome of interest, observed on a sample of subjects. Some of the sample units are treated with an appropriate treatment (treated group);
the other sample units are untreated (control group).
If $T$ denotes the treatment indicator variable, then whenever $T= 1$,  $Y_1$ is observed; otherwise, if $T= 0$, $Y_0$ is observed. Here $Y_1$ and $Y_0$ are the {\em potential outcomes} due to receiving and not receiving the treatment, respectively.
The observed outcome is then equal to $Y= TY_{(1)} +  (1 - T) Y_{(0)}$.
In the sequel,  $F_1(y)=P(Y_{(1)} \le y)$ will denote the distribution function (d.f.) of $Y_{(1)}$, and $F_0(y)=P(Y_{(0)} \le y)$ the d.f. of $Y_{(0)}$.

As already said in the introduction, receiving a treatment is not a ``purely random'' event, as in experimental framework. On the contrary, there could be relevant differences between treated and untreated subjects, due to the presence of confounding covariates. In the sequel, we will denote by $X$ the (random)
vector of relevant covariates, that is assumed to be observed.

In order to get consistent estimates, identification restrictions are necessary. The relevant restriction
assumed in the sequel is selection of treatment is based on observable variables: given a set of observed covariates, assignment either to the treatment group or to the control group is random.  Formally speaking, let $p(x) = P(T = 1 \vert X = x)$ be
the conditional probability of receiving the treatment given covariates $X$; it is termed {\em propensity score}. The marginal probability of being treated, $P(T = 1)$, is equal to $E[p(X)]$.

In the sequel, our main assumption is that the {\em strong ignorability} conditions (cfr. \cite{rosrub83}) are fulfilled. In more detail, consider next the joint distribution of  ($Y_{(1)}, \, Y_{(0)}, \, T, \, X$), and denote by $\mathcal{X}$ the support of $X$.
The following assumptions are assumed to hold.
\begin{itemize}
\item[(i)] Unconfoundedness (cfr. \cite{rubin77}): given $X$, $(Y_{(1)}, \, Y_{(0)})$ are jointly independent of $T$: $( Y_{(1)}, \, Y_{(0)}) \independent T \vert X$.
\item[(ii)] The support of $X$, $\mathcal{X}$ is a compact subset of $\mathbb{R}^l$.
\item[(iii)] Common support: there exists $\delta>0$ for which $\delta \le p(x) \le 1-\delta \; \forall \, x \in \mathcal{X}$, so that $\underset{x}{\inf} \, p(x) \ge \delta$, $\underset{x}{\sup} \, p(x) \le 1-\delta$.
\end{itemize}
Assumption $(i)$ is also known as \textit{Conditional Independence Assumption} ($CIA$).

For the sake of simplicity, we will use in the sequel the notation
\begin{eqnarray}
p_{1} ( x ) =  p ( x ) , \; p_0 ( x ) =  1- p ( x ) . \label{eq:simp_01}
\end{eqnarray}
\noindent From the above assumptions, the basic relationships
\begin{eqnarray}
E \left [ \frac{1}{p_{j} (x)} I_{(T=j)}  I_{(Y \le y)} \right ] & = & E_x \left [ E \left [ \left . \frac{1}{p_{j} (x)} I_{(T=j)}
I_{(Y_{(j)} \le y)} \right \vert  x \right ]
\right ] \nonumber \\
\, & = & E_x \left [ \frac{1}{p_{j} (x)} E \left [ \left . I_{(T=j)} \right \vert x \right ]
E \left [ \left . I_{( Y_{(j)} \le y)} \right \vert x \right ] \right ] \nonumber \\
\, & = &  E_x \left [ F_j (y \vert x)) \right ] \nonumber \\
\, & = & F_j(y) , \;\; j=1, \, 0.
 \label{eq:e1}
\end{eqnarray}
\noindent are obtained.

The {\em Average Treatment Effect} ({\em ATE}) is defined as $\tau = E[ Y_{(1)} ] - E[ Y_{(0)} ]$. The estimation of {\em ATE} is a problem of primary importance in the literature,
and several different approaches have been proposed (\cite{AtheyImbens2017} and references therein). Another parameter of interest in the {\em Quantile Treatment
Effect} ({\em QTE}), which is the difference between quantiles of $F_1$ and $F_0$: $F_1^{-1} (p) - F_0^{-1} (p)$, with $0<p<1$; cfr.
 \cite{firpo07}. In particular, when $p=1/2$ it
reduces to the {\em Median Treatment Effect}.

As already said in the introductory section, in the present paper we concentrate on the estimation of the d.f.s  $F_1 (y)$, $F_0 (y)$
under treatment and control, respectively.
As special cases, the results in \cite{HIR03} and \cite{firpo07} will be obtained.

\section{Estimation of $F_1, F_0$}
\label{sec:F1F0est}

\subsection{Basics}
\label{sec:F1F0est_def}

The basic approach to the estimation of $F_1$, $F_0$ follows, in principle, the ideas developed in \cite{HIR03} to estimate {\em ATE}. First of all,
the propensity score $p(x)$ is estimated by a sieve estimator $\widehat{p}_n (x)$, say; cfr. \cite{HIR03}, \cite{firpo07}. Let $\Hv_K (x) =
\{ H_{k,j} (x) \}$, $j=1, \, \dots , \, K$ be a $K$-dimensional vector of polynomials in $x \in  \mathcal{X}$, such that
\begin{itemize}
\item[S1.] $\Hv_K ; \mathcal{X} \rightarrow \mathbb{R}^K$;
\item[S2.] $H_{k,1} (x) =1 $;
\item[S3.] $\Hv_K$ includes all polynomials up to order $n$ whenever $K > (n+1)^r$, with $K = K(n) \rightarrow \infty$ as $ n
\rightarrow \infty$.
\end{itemize}
The propensity score is approximated by a linear combination of $ H_{k,j} (x)$ on a logit scale, with coefficients estimated
by maximizing a pseudo-likelihood. More formally, if $L(z) =1 / ( 1+ e^{-x} )$, then
$ \widehat{p}_n (x) = L( \Hv_K (x)^{T} \widehat{\pigrecov}_K )$,
where the $K$-dimensional vector $\widehat{\pigrecov}_K$ is estimated
by maximum likelihood method:
\begin{eqnarray}
\widehat{\pigrecov}_K = {\mathrm{argmax}} \frac{1}{n} \sum_{i=1}^{n}
\left \{ T_i \log \left ( L(  \Hv_K (x)^{T} \pigrecov_K ) \right ) + ( 1- T_i )
\log \left ( L( 1- \Hv_K (x)^{T} \pigrecov_K ) \right )
\right \} . \nonumber
\end{eqnarray}
In the sequel, the following result will be widely used.

\begin{theorem}
\label{th_consist_sieve}
Assume that S1 - S3 are fulfilled, and that $p(x)$ is continuously differentiable of order $s \geq 7 l$, with $l= {\mathrm{dim}} ( \mathcal{X} )$. If
$K= n^{\nu}$, with $ 1/ ( 4 ( s/l -1 )) < \nu < 1/9$, then
\begin{eqnarray}
\sup_x \left \vert \widehat{p}_n (x) - p(x) \right \vert \stackrel{p}{\rightarrow} 0 \;\; {\mathrm{as}} \; n \rightarrow \infty .
\label{eq:consist_sieve}
\end{eqnarray}
\end{theorem}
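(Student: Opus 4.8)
The plan is to follow the route of \cite{HIR03} (see also \cite{firpo07}): work on the logit scale, compare $\widehat{p}_n$ with a deterministic polynomial ``pivot'', and split the error into an approximation (bias) part and a sampling part, the latter being controlled through the global concavity of the logistic pseudo-log-likelihood. Set $g(x)=\log\{p(x)/(1-p(x))\}$. By Assumption (iii), $p(x)\in[\delta,1-\delta]$ on $\mathcal{X}$, and since $t\mapsto\log\{t/(1-t)\}$ is $C^{\infty}$ on $[\delta,1-\delta]$, the log-odds $g$ inherits from $p$ continuous differentiability of order $s$ on the compact set $\mathcal{X}$. By Assumption S3 the sieve $\Hv_K$ contains all polynomials of total degree up to an order $\asymp K^{1/l}$, so the classical error bound for multivariate polynomial approximation of $C^{s}$ functions on a compact set yields a coefficient vector $\pigrecov_K^{*}$ with $\sup_{x\in\mathcal{X}}|g(x)-\Hv_K(x)^{T}\pigrecov_K^{*}|=O(K^{-s/l})$. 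Since $L$ is Lipschitz with constant $\tfrac14$, the pivot $\widetilde{p}_K(x):=L(\Hv_K(x)^{T}\pigrecov_K^{*})$ obeys $\sup_x|\widetilde{p}_K(x)-p(x)|=O(K^{-s/l})$, which tends to $0$ for every $\nu>0$.

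\noindent By the triangle inequality and the Lipschitz property of $L$,
\[
\sup_x\bigl|\widehat{p}_n(x)-p(x)\bigr|\ \le\ \tfrac14\,\zeta(K)\,\bigl\|\widehat{\pigrecov}_K-\pigrecov_K^{*}\bigr\|\ +\ O\bigl(K^{-s/l}\bigr),\qquad \zeta(K):=\sup_{x\in\mathcal{X}}\|\Hv_K(x)\|,
\]
where $\zeta(K)$ is bounded by a fixed power of $K$ (again by S3 and compactness of $\mathcal{X}$). Hence everything reduces to showing $\|\widehat{\pigrecov}_K-\pigrecov_K^{*}\|=O_p\bigl(\sqrt{K/n}+\sqrt{K}\,K^{-s/l}\bigr)$. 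To obtain this I would use that the objective $\ell_n(\pigrecov)$ defining $\widehat{\pigrecov}_K$ is concave, so $\nabla\ell_n(\widehat{\pigrecov}_K)=0$ and a mean-value expansion gives $\|\widehat{\pigrecov}_K-\pigrecov_K^{*}\|\le\lambda_{\min}\bigl(-\nabla^{2}\ell_n(\bar{\pigrecov})\bigr)^{-1}\,\bigl\|\nabla\ell_n(\pigrecov_K^{*})\bigr\|$ for some $\bar{\pigrecov}$ on the segment joining $\pigrecov_K^{*}$ to $\widehat{\pigrecov}_K$. The score $\nabla\ell_n(\pigrecov_K^{*})=n^{-1}\sum_i\{T_i-\widetilde{p}_K(x_i)\}\Hv_K(x_i)$ has mean of norm $O(\sqrt{K}\,K^{-s/l})$ (at most $\sup_x|p(x)-\widetilde{p}_K(x)|$ times $E\|\Hv_K(X)\|$, and $E\|\Hv_K(X)\|^{2}=\mathrm{tr}\,E[\Hv_K\Hv_K^{T}]=O(K)$ under the standard normalisation $E[\Hv_K\Hv_K^{T}]=I_K$) and centred part of order $O_p(\sqrt{K/n})$ by a direct variance computation.

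\noindent The delicate step is the lower bound on $\lambda_{\min}(-\nabla^{2}\ell_n(\bar{\pigrecov}))$, and this is where the growing dimension $K=K(n)$ genuinely bites. Here $-\nabla^{2}\ell_n(\pigrecov)=n^{-1}\sum_i L'(\Hv_K(x_i)^{T}\pigrecov)\,\Hv_K(x_i)\Hv_K(x_i)^{T}$; its population version at $\pigrecov_K^{*}$ has smallest eigenvalue at least $\delta(1-\delta)-o(1)$ (by (iii), since $\widetilde{p}_K\to p$ uniformly as shown above, and the normalisation of the basis). One then needs a uniform law of large numbers for this matrix average over a ball of radius $o(1)$ around $\pigrecov_K^{*}$, with approximation error controlled — via a matrix Bernstein / moment inequality, the uniform bound $0<L'\le\tfrac14$ and $\|\Hv_K(x)\Hv_K(x)^{T}\|\le\zeta(K)^{2}$ — by an expression of the form $\zeta(K)^{2}(K/n)^{1/2}$ up to logarithmic factors. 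Once $\nu$ is small enough that this error is $o(1)$, which the condition $\nu<1/9$ in the statement guarantees, one gets $\lambda_{\min}(-\nabla^{2}\ell_n(\bar{\pigrecov}))\ge\tfrac12\,\delta(1-\delta)$ with probability tending to $1$, and collecting the three bounds gives the claimed rate for $\|\widehat{\pigrecov}_K-\pigrecov_K^{*}\|$.

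\noindent Putting this back into the display, $\sup_x|\widehat{p}_n(x)-p(x)|=O_p(\zeta(K)\sqrt{K/n})+O(\zeta(K)\sqrt{K}\,K^{-s/l})+O(K^{-s/l})$. With $K=n^{\nu}$ and $\zeta(K)$ polynomial in $K$, the first term is a strictly negative power of $n$ precisely because $\nu<1/9$, while the bias terms are negative powers of $n$ for every $\nu>0$; the lower restriction $\nu>1/(4(s/l-1))$, together with $s\ge 7l$, merely ensures that the admissible interval for $\nu$ is non-empty (with room to spare) and is the condition under which the $\sqrt{n}$-rate refinements used later in the paper also go through. Hence $\sup_x|\widehat{p}_n(x)-p(x)|\stackrel{p}{\rightarrow}0$. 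I expect the main obstacle to be exactly the uniform, growing-dimension control of the empirical gradient and Hessian described above — which is, in essence, Lemma~1 of \cite{HIR03} and may alternatively simply be invoked; the remaining ingredients are standard polynomial approximation theory and an elementary concavity argument.
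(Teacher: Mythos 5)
Your outline is correct and follows essentially the same route as the paper: the paper offers no independent argument for Theorem \ref{th_consist_sieve} but simply cites \cite{HIR03}, and your sketch (logit-scale polynomial approximation, bias/sampling-error split, concavity of the pseudo-likelihood with growing $K$) is precisely the argument of that reference, which, as you note, may simply be invoked.
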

\noindent {\bf Proof.} See \cite{HIR03}.

Again, for notational simplicity, and similarly to $( \ref{eq:simp_01} )$, define:
\begin{eqnarray}
\widehat{p}_{1,n} ( x ) =  \widehat{p}_{n} ( x ) , \; \widehat{p}_{0,n} ( x ) =  1- \widehat{p}_{n} ( x ) .
\label{eq:simplif_not}
\end{eqnarray}

In order to estimate $F_1$ and $F_0$, the following ``H\'{a}jek - type'' estimators are considered:
\begin{eqnarray}
\widehat{F}_{1,n}(y) = \sum_{i=1}^{n} w^{(1)}_{i,n} I_{(Y_i \le y)} , \;\;\;
\widehat{F}_{0,n}(y) = \sum_{i=1}^{n} w^{(0)}_{i,n} I_{(Y_i \le y)}
\label{eq:e5}
\end{eqnarray}
\noindent where
\begin{eqnarray}
w^{(j)}_{i,n}=\frac{I_{(T_i=j)}/\widehat{p}_{j,n} (x_i) }{\sum_{k=1}^{n} I_{(T_k=1)}/\widehat{p}_{j,n} (x_k) } , \; j=1, \, 0;
 \;\; i=1, \, \dots , \, n .
\label{eq:e6}
\end{eqnarray}
It is immediate to see that $( \ref{eq:e5} )$ are proper d.f.s, {\em i.e.} they are {\em bona fide} estimators.

As alternative estimators of $F_1$, $F_0$, the following ``Horvitz-Thompson - type'' estimators could be considered:
\begin{eqnarray}
\widehat{F}^{HT}_{1,n}(y) =  \frac{1}{n} \sum_{i=1}^{n} \frac{I_{(T_i=1)}}{\widehat{p}_{1,n}(x_i)} I_{(Y_i \le y)} , \;\;\;
\widehat{F}^{HT}_{0,n}(y) = \frac{1}{n} \sum_{i=1}^{n} \frac{I_{(T_i=0)}}{\widehat{p}_{0,n} (x_i)}   I_{(Y_i \le y)} .
\label{eq:e5b}
\end{eqnarray}
We will mainly concentrate on $( \ref{eq:e5} )$ for two reasons. First of all, $( \ref{eq:e5b} )$ are not proper d.f.s, because
$\widehat{F}^{HT}_{1,n}( + \infty ) \neq 1$, $\widehat{F}^{HT}_{0,n}( + \infty ) \neq 1$ with positive probability. In the second place, as it will
be seen in the sequel, $( \ref{eq:e5b} )$ are asymptotically equivalent to $( \ref{eq:e5} )$.

\subsection{Basic asymptotic results}
\label{sec:F1F0asymp}

The goal of the present section is to study the asymptotic, large sample, properties of estimators $( \ref{eq:e5} )$. Our first result
is a Glivenko - Cantelli type result, showing the uniform consistency (in probability) of $F_{1,n}(y)$, $F_{0,n}(y)$.

\begin{proposition}
\label{gliv-cant}
Assume that the conditions of Th. $\ref{th_consist_sieve}$ are fulfilled.
Then:
\begin{eqnarray}
\sup_{x} \left \vert \widehat{F}_{1,n}(y) - F_1 (y) \right \vert \overset{p}{\rightarrow} 0 , \;\;
\sup_{x} \left \vert \widehat{F}_{0,n}(y) - F_0 (y) \right \vert \overset{p}{\rightarrow} 0\;\;\; {\mathrm{as}} \; n \rightarrow \infty .
\label{eq:glivenko_cantelli}
\end{eqnarray}
\end{proposition}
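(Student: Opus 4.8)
The plan is to reduce the statement to the classical Glivenko--Cantelli theorem by peeling off, in turn, the estimation error in the propensity score and the self-normalization of the H\'{a}jek weights. Fix $j \in \{0,1\}$ and introduce the Horvitz--Thompson quantities built with the \emph{true} propensity score,
$$\widetilde{F}_{j,n}(y) = \frac{1}{n}\sum_{i=1}^{n} \frac{I_{(T_i=j)}}{p_j(x_i)}\, I_{(Y_i \le y)}, \qquad \widetilde{N}_{j,n} = \frac{1}{n}\sum_{i=1}^{n} \frac{I_{(T_i=j)}}{p_j(x_i)}.$$
By $(\ref{eq:e1})$ the summands defining $\widetilde{F}_{j,n}(y)$ are i.i.d.\ with mean $F_j(y)$, those defining $\widetilde{N}_{j,n}$ are i.i.d.\ with mean $1$, and by assumption $(iii)$ both are uniformly bounded by $1/\delta$. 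Since $y\mapsto I_{(Y_i\le y)}$ is monotone, a uniform law of large numbers (in P\'{o}lya form, putting the at most countably many jump points of $F_j$ into a finite grid to accommodate discontinuities) yields $\sup_y |\widetilde{F}_{j,n}(y) - F_j(y)| \stackrel{p}{\to} 0$ and $\widetilde{N}_{j,n} \stackrel{p}{\to} 1$.

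Next I would control the replacement of $p_j$ by $\widehat{p}_{j,n}$. By Theorem $\ref{th_consist_sieve}$, $\eta_n := \sup_x |\widehat{p}_n(x) - p(x)| = \sup_x|\widehat p_{j,n}(x)-p_j(x)| \stackrel{p}{\to} 0$, and by $(iii)$ we have $p_j(x) \ge \delta$ for all $x$, so on the event $\{\eta_n \le \delta/2\}$ (which has probability tending to one) $\widehat{p}_{j,n}(x) \ge \delta/2$ for every $x$. On that event, since $I_{(T_i=j)}I_{(Y_i\le y)}\in\{0,1\}$,
$$\sup_y \left| \widehat{F}^{HT}_{j,n}(y) - \widetilde{F}_{j,n}(y) \right| \le \frac{1}{n}\sum_{i=1}^{n} \left| \frac{1}{\widehat{p}_{j,n}(x_i)} - \frac{1}{p_j(x_i)} \right| = \frac{1}{n}\sum_{i=1}^{n} \frac{|\widehat{p}_{j,n}(x_i) - p_j(x_i)|}{\widehat{p}_{j,n}(x_i)\, p_j(x_i)} \le \frac{2}{\delta^2}\, \eta_n \stackrel{p}{\to} 0 ,$$
and the same estimate with $I_{(Y_i\le y)}$ replaced by $1$ gives $|\widehat{F}^{HT}_{j,n}(+\infty) - \widetilde{N}_{j,n}| \le 2\eta_n/\delta^2 \stackrel{p}{\to} 0$. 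Combining with the first paragraph, $\sup_y|\widehat{F}^{HT}_{j,n}(y) - F_j(y)| \stackrel{p}{\to} 0$ and $\widehat{F}^{HT}_{j,n}(+\infty) \stackrel{p}{\to} 1$.

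Finally I would pass from the Horvitz--Thompson estimator to the H\'{a}jek estimator via the identity $\widehat{F}_{j,n}(y) = \widehat{F}^{HT}_{j,n}(y) / \widehat{F}^{HT}_{j,n}(+\infty)$, which is immediate from $(\ref{eq:e5})$--$(\ref{eq:e6})$. Writing
$$\widehat{F}_{j,n}(y) - F_j(y) = \frac{1}{\widehat{F}^{HT}_{j,n}(+\infty)}\Big(\widehat{F}^{HT}_{j,n}(y) - F_j(y)\Big) + F_j(y)\left(\frac{1}{\widehat{F}^{HT}_{j,n}(+\infty)} - 1\right),$$
and using $0\le F_j\le 1$ together with the fact that $\widehat{F}^{HT}_{j,n}(+\infty)\stackrel{p}{\to}1$ is eventually bounded away from $0$, both terms are $o_p(1)$ uniformly in $y$, which is $(\ref{eq:glivenko_cantelli})$. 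The only genuinely delicate point is that the H\'{a}jek weights are neither independent across $i$ nor functions of observation $i$ alone: they depend on the whole sample through the normalizing sum and through the sieve estimator $\widehat{p}_n$. The decomposition above is designed precisely to confine that dependence to the scalar normalizer $\widehat{F}^{HT}_{j,n}(+\infty)$ and to the uniform propensity-score error $\eta_n$ controlled by Theorem $\ref{th_consist_sieve}$, after which only the classical i.i.d.\ Glivenko--Cantelli theorem is required.
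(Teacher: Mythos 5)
Your proposal is correct and follows essentially the same route as the paper: the paper's Appendix proves the result by comparing $\widehat{F}_{j,n}$ with the pseudo-estimator $\widetilde{F}_{j,n}$ built from the true propensity score (Lemmas \ref{lemma1}--\ref{lemma3}, which control $\sup_x\vert 1/\widehat{p}_{j,n}(x)-1/p_j(x)\vert$ and the H\'{a}jek normalizer exactly as you do) and then establishing uniform convergence of $\widetilde{F}_{j,n}$ to $F_j$ by the same P\'{o}lya-type grid argument (Lemma \ref{lemma4}). Your only cosmetic difference is ordering the steps as true-propensity HT $\rightarrow$ estimated-propensity HT $\rightarrow$ H\'{a}jek ratio, which is an equivalent reorganization of the same decomposition.
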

\noindent {\bf Proof.} See Appendix.

Next step consists in studying the limit, large sample distribution of the above estimators. Define first the stochastic process
\begin{eqnarray}
W_n (y)  =
\left[ \begin{array}{cc}
W_{1,n} (y) \\
W_{0,n} (y)
\end{array} \right]
= \left[ \begin{array}{cc}
\sqrt{n} (\widehat{F}_{1,n}(y)-F_1(y)) \\
\sqrt{n} (\widehat{F}_{0,n}(y)-F_0(y))
\end{array} \right] , \;\; y \in \mathbb{R}
\label{eq:emp_proc}
\end{eqnarray}
The bivariate stochastic process $W_n ( \cdot )$  $( \ref{eq:emp_proc} )$ essentially plays the same role as the empirical process in
classical non-parametric statistics, with a complication due to the presence of $\widehat{F}_{1,n}(y)$, $\widehat{F}_{0,n}(y)$ instead of
the usual empirical distribution function.

The weak convergence of $W_n ( \cdot )$ can be proved similarly to the classical empirical process, with modifications.
In the first place, from
\begin{eqnarray}
\sqrt{n} (\widehat{F}_{j,n}(y)-F_j(y)) = \left ( \frac{1}{n}\sum_{i=1}^{n} \frac{I_{(T_i=j)}}{\widehat{p}_{j,n} (x_i)} \right )^{-1}
\frac{1}{\sqrt{n}} \sum_{i=1}^{n} \frac{I_{(T_i=j)}}{\widehat{p}_{j,n} (x_i)}(I_{(Y_i \leq y)} - F_j (y)) , \;\; j=1, \, 0 \nonumber
\end{eqnarray}
\noindent and from Lemma \ref{lemma2} , it is seen that
the limiting distribution of $W_{n}(y)$, if it exists, coincides with the limiting distribution of
\begin{eqnarray}
\left[ \begin{array}{cc}
\frac{1}{\sqrt{n}} \sum_{i=1}^{n} \frac{I_{(T_i=1)}}{\widehat{p}_{1,n}(x_i)}(I_{(Y_i \le y)}-F_1(y)) \\
\frac{1}{\sqrt{n}} \sum_{i=1}^{n} \frac{I_{(T_i=0)}}{\widehat{p}_{0,n}(x_i)}(I_{(Y_i \le y)}-F_0(y))
\end{array} \right] , \;\; y \in \mathbb{R} .
\label{eq:emp_proc_interm}
\end{eqnarray}
In the second place, by repeating {\em verbatim} the arguments in Th. 1 in \cite{HIR03}, and \cite{HIRaddendum}, with $I_{(Y_{i} \leq y)}$ instead of $Y_i$ and
$F_j (y \vert x) = P( Y_{(j)} \leq y \vert x) $ instead of $E [ Y_{(j)} \leq y \vert x ]$, it is seen that, if
$K= n^{\nu}$, with $ 1/ ( 4 ( s/l -1 )) < \nu < 1/9$, then the relationship
\begin{eqnarray}
\left[ \begin{array}{cc}
\frac{1}{\sqrt{n}} \sum_{i=1}^{n} \frac{I_{(T_i=1)}}{\widehat{p}_{1,n}(x_i)}(I_{(Y_i \le y)}-F_1(y)) \\
\frac{1}{\sqrt{n}} \sum_{i=1}^{n} \frac{I_{(T_i=0)}}{\widehat{p}_{0,n}(x_i)}(I_{(Y_i \le y)}-F_0(y))
\end{array} \right]  =
\left[ \begin{array}{cc}
\frac{1}{\sqrt{n}} \sum_{i=1}^{n} Z_{1,i} (y) \\
\frac{1}{\sqrt{n}} \sum_{i=1}^{n} Z_{0,i} (y)
\end{array} \right]  + o_p (1)
, \;\; y \in \mathbb{R} .
\label{eq:emp_proc_interm2}
\end{eqnarray}
\noindent holds, where
\begin{eqnarray}
Z_{j,i} (y)= \left ( \frac{I_{(T_i=j)}}{p_{j} (x_i)}I_{(Y_i \le y)}-F_j(y) \right )-
\frac{F_{j} (y \vert x_i)}{p_{j} (x_i)} \left ( I_{(T_i=j)}-p_{j} (x_i) \right ) , \;\; j=1, \, 0; \; i=1, \, \dots , \, n .
\label{eq:def-z}
\end{eqnarray}
The term $o_p (1)$ appearing in $( \ref{eq:emp_proc_interm2} )$ depends on $y$, and, as it appears by using the bounds in  \cite{HIRaddendum},
convergence in probability to zero (or better, to the vector $[0, \, 0]^{T}$) holds uniformly over compact sets of $y$s. Hence, in order to prove that
the sequence of stochastic processes $( \ref{eq:emp_proc} )$ converges weakly to a limit process, it is enough to prove that $( \ref{eq:emp_proc_interm2} )$
converges weakly to a limiting process.

\begin{proposition}
\label{weak-conv}
Assume that the conditions of Th. $\ref{th_consist_sieve}$ are fulfilled, and that $F_1 (y)$, $F_1 (y \vert x ) $, $F_0 (y)$,
$F_0 (y \vert x )$ are continuous.
Then, the sequence of stochastic processes $( \ref{eq:emp_proc} )$ converges weakly, as $n$ goes to infinity, to a
Gaussian process $W(y) = [ W_1 (y) , \, W_0 (y) ]^{T}$ with null mean function ($E[ W_j (y) ] = 0$, $j=1, \, 0$) and covariance kernel:
\begin{eqnarray}
C(y,t) = E \left [ W(y) \otimes W(t) \right ] = \begin{bmatrix}
C_{11}(y,t) & C_{10}(y,t)\\
C_{01}(y,t) & C_{00}(y,t)\\
\end{bmatrix}
\end{eqnarray}
where:
\begin{eqnarray}
C_{jj} (y, \, t) & = & E \left [  \frac{1}{p_{j} (x)}(F_j {(y \land t \vert x )} - F_j (y \vert x) F_j (t \vert x)) \right ] \nonumber \\
& + &
E_x \left [ (F_j(y \vert x) - F_j(y)) (F_j ( t \vert x) - F_j(t)) \right  ] , \;\; j=1, \, 0 ;
\label{eq:e32} \\
C_{10} (y, \, t) & = & E \left [(F_1(y \vert x)-F_1(y))(F_0(t \vert x)-F_0(t)) \right ] \nonumber \\
\, & = &
E \left  [ F_1(y\vert x) F_0(t \vert x) \right ] - F_1(y) F_0(t) ;
\label{eq:e34} \\
C_{01}(y, \, t) & = & C_{10} (t, \, y) = E \left [ (F_1(t \vert x)-F_1(t))(F_0(y \vert x)- F_0(y)) \right ] .
\label{eq:e3b4}
\end{eqnarray}
Weak convergence takes place in the set $l_2^{\infty} ( \mathbb{R} )$ of bounded functions $\mathbb{R} \mapsto \mathbb{R}^2$ equipped
with the sup-norm (if $f = (f_1 , \, f_0 )$) $\| f \| = \sup_y \vert f_1 (y ) \vert + \sup_y \vert f_0 (y ) \vert $.
\end{proposition}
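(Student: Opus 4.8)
The plan is to argue along the classical route for weak convergence of empirical processes in $\ell^\infty$, starting from the reduction already made before the statement: it is enough to show that the sequence of processes $V_n(y) = \big( n^{-1/2}\sum_{i=1}^n Z_{1,i}(y),\ n^{-1/2}\sum_{i=1}^n Z_{0,i}(y) \big)^{T}$, with $Z_{j,i}(y)$ as in $(\ref{eq:def-z})$, converges weakly to the Gaussian process $W(\cdot)$ of the statement; the original process $(\ref{eq:emp_proc})$ then has the same limit, since by Lemma~\ref{lemma2} the multiplicative factor $\big(n^{-1}\sum_i I_{(T_i=j)}/\widehat p_{j,n}(x_i)\big)^{-1}$ tends to $1$ in probability and by $(\ref{eq:emp_proc_interm2})$ the difference between the two processes is $o_p(1)$ uniformly on compacts. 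Since $(Y_i,T_i,x_i)$, $i=1,\dots,n$, are i.i.d., for fixed $y$ the summands $Z_{j,i}(y)$ are i.i.d.\ in $i$; using the conditional independence assumption and $E[I_{(T_i=j)}\mid x_i]=p_j(x_i)$ exactly as in $(\ref{eq:e1})$ one checks $E[Z_{j,i}(y)]=0$, and by the common-support assumption $(iii)$ every factor $1/p_j(x_i)$ is bounded by $\delta^{-1}$, so $Z_{j,i}(y)$ has a second moment bounded uniformly in $y$.

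\emph{Finite-dimensional distributions.} For any $y_1,\dots,y_m\in\mathbb R$ the random vectors $\big(Z_{1,i}(y_1),Z_{0,i}(y_1),\dots,Z_{1,i}(y_m),Z_{0,i}(y_m)\big)$, $i=1,\dots,n$, are i.i.d.\ with zero mean and finite covariance matrix, so the multivariate Lindeberg--L\'evy CLT gives joint asymptotic normality of $\big(V_n(y_1),\dots,V_n(y_m)\big)$ with zero mean. A direct second-moment computation from $(\ref{eq:def-z})$, using again $(\ref{eq:e1})$, $E[I_{(T_i=j)}\mid x_i]=p_j(x_i)$, $\Var(I_{(T_i=j)}\mid x_i)=p_j(x_i)(1-p_j(x_i))$, and the identity $I_{(T_i=1)}I_{(T_i=0)}=0$ (which is what collapses the off-diagonal blocks to the purely ``conditional-covariance'' form $(\ref{eq:e34})$--$(\ref{eq:e3b4})$), yields $E[V_n(y)\,V_n(t)^{T}]\to C(y,t)$ with the entries $(\ref{eq:e32})$--$(\ref{eq:e3b4})$. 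This pins down the covariance of the limit.

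\emph{Tightness and conclusion.} Because the norm in the statement is the sum of the two coordinate sup-norms, $V_n$ is asymptotically tight in $l_2^{\infty}(\mathbb R)$ as soon as each coordinate process is, i.e.\ as soon as each class $\mathcal F_j=\{\,\omega\mapsto Z_{j,\cdot}(y):y\in\mathbb R\,\}$ is $P$-Donsker. I would verify this by permanence properties: the half-line indicators $\{I_{(\cdot\le y)}:y\in\mathbb R\}$ form a VC class, hence are Donsker; the class $\{x\mapsto F_j(y\mid x):y\in\mathbb R\}$ is ordered (monotone in the index $y$) with envelope $1$, and since the marginal $F_j$ is continuous it has $L_2(P)$ bracketing numbers of order $\varepsilon^{-2}$, so it too is Donsker; $\{y\mapsto F_j(y):y\in\mathbb R\}$ consists of constants. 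All functions involved are uniformly bounded by $\delta^{-1}$, and $Z_{j,\cdot}(y)$ is obtained from the above by products with fixed bounded functions, by a product of two uniformly bounded Donsker classes, and by finite sums, each of which preserves the Donsker property; hence $\mathcal F_j$ is Donsker, $V_n$ is asymptotically $\rho$-equicontinuous in probability for the intrinsic $L_2(P)$ semimetric $\rho$, and $\rho$ (being dominated by $|F_j(y)-F_j(t)|$) makes $\mathbb R$ totally bounded. Joint finite-dimensional convergence plus asymptotic tightness give $V_n\Rightarrow W$ in $l_2^{\infty}(\mathbb R)$, $W$ the centered Gaussian process with covariance $(\ref{eq:e32})$--$(\ref{eq:e3b4})$; since $F_1,F_1(\cdot\mid x),F_0,F_0(\cdot\mid x)$ are continuous, $y\mapsto C(y,y)$ and hence $\rho$ are continuous, so $W$ has a version with continuous trajectories. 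Transferring back to $(\ref{eq:emp_proc})$ through Lemma~\ref{lemma2} and $(\ref{eq:emp_proc_interm2})$, and using that both $W_n$ and $V_n$ vanish at $y=\pm\infty$ (the H\'ajek estimators $\widehat F_{j,n}$ being proper d.f.s), one upgrades the convergence from $l_2^{\infty}([-M,M])$ to $l_2^{\infty}(\mathbb R)$ by the standard piecewise weak-convergence argument, combining the on-compacts negligibility of the remainder in $(\ref{eq:emp_proc_interm2})$ with tightness at the tails.

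\emph{Main obstacle.} The genuine work sits in the tightness step: showing that the non-standard summands $Z_{j,i}(y)$ --- products of half-line indicators with the unknown bounded nuisance quantities $1/p_j(x)$ and the conditional d.f.\ $F_j(y\mid x)$ --- index a Donsker class, the delicate ingredient being the bracketing-entropy bound for the ordered conditional-d.f.\ class; and then combining this carefully with the only-on-compacts control of the $o_p(1)$ remainder from $(\ref{eq:emp_proc_interm2})$ to secure weak convergence uniformly over all of $\mathbb R$.
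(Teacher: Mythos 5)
Your proposal is correct and shares the paper's overall skeleton --- the same reduction via Lemma \ref{lemma2} and the expansion $(\ref{eq:emp_proc_interm2})$ to the i.i.d.\ summands $Z_{j,i}(y)$ of $(\ref{eq:def-z})$, and essentially the same second-moment computation identifying the kernel $(\ref{eq:e32})$--$(\ref{eq:e3b4})$ --- but it takes a genuinely different route at the tightness step. The paper verifies directly the hypotheses of Theorem 2.11.1 of van der Vaart and Wellner for the triangular array: an explicit increment bound $E[(Z_{j,i}(t)-Z_{j,i}(y))^2]\le c\,|F_j(t)-F_j(y)|$ with $c=1+\delta^{-1}$, a Lindeberg condition obtained from the uniform bound $2/\delta$ on the summands, and a bound of order $K/\epsilon$ on the covering numbers under a random pseudometric $d_n$, which makes the random entropy integral vanish. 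You instead invoke fixed-class Donsker theory plus permanence properties: the VC property of the half-line indicators, an $\varepsilon^{-2}$ bracketing bound for the monotone class $\{x\mapsto F_j(y\mid x)\}$ (which is where the continuity of $F_j$ enters, exactly as it enters the paper's increment bound), multiplication by the fixed functions bounded by $\delta^{-1}$ thanks to assumption (iii), and closure of uniformly bounded Donsker classes under sums and products; finite-dimensional convergence then comes from the multivariate CLT rather than being absorbed into the van der Vaart--Wellner theorem. The two arguments exploit the same quantitative fact (increments of $Z_{j,\cdot}$ controlled by increments of the marginal $F_j$), so neither is stronger; yours is more modular and leans on standard Donsker permanence results, while the paper's is self-contained within a single theorem and produces the explicit constant used later in the proof of Proposition \ref{continuous-trajectories}. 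One point in your favour: you are more explicit than the paper about why convergence transfers from the compactly-uniform control of the $o_p(1)$ remainder in $(\ref{eq:emp_proc_interm2})$ to weak convergence in $l_2^{\infty}(\mathbb{R})$, noting that both processes vanish at $\pm\infty$ and that your bracketing argument already yields tightness over all of $\mathbb{R}$; the paper simply asserts this passage. Minor quibbles only: the vanishing of the off-diagonal cross terms uses the conditional centering $E[I_{(T=j)}(I_{(Y_{(j)}\le y)}-F_j(y\mid x))\mid x]=0$ under unconfoundedness in addition to $I_{(T=1)}I_{(T=0)}=0$ (you cite the needed conditional moment facts, but tersely), and the usual measurability caveats for the Donsker permanence results should be mentioned; neither affects the validity of the argument.
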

\noindent {\bf Proof.} See Appendix.

Due to the continuity of $F_1$, $F_0$, the weak convergence of Proposition $\ref{weak-conv}$ also holds in the space $D[-\infty, +\infty]^2$
of $\mathbb{R}^2$-valued  c\`{a}dl\`{a}g functions equipped with the Skorokhod topology.

Consider now the Horvitz-Thompson estimators $( \ref{eq:e5b} )$, and
define:
\begin{eqnarray}
W^{HT}_{jn} (y) = \sqrt{n} ( \widehat{F}^{HT}_{j,n}(y) - F_j (y) ), \;\; j=1, \, 0. \nonumber
\end{eqnarray}
From the proof of Proposition $\ref{weak-conv}$, it appears that the sequence of stochastic processes $W^{HT}_n ( \cdot )= [ W^{HT}_{1n} ( \cdot ) , \,
W^{HT}_{0n}  ( \cdot ) ]^{T}$ converges weakly to the same Gaussian limiting process  $W( \cdot ) = [ W_1 ( \cdot ) , \, W_0 ( \cdot ) ]^{T}$ that appears
in Proposition $\ref{weak-conv}$. Hence, the Horvitz-Thompson estimators $( \ref{eq:e5b} )$ are asymptotically equivalent to the
H\'{a}jek  estimators $( \ref{eq:e5} )$.

As well known, in classical nonparametric statistics the empirical process converges weakly to a Brownian bridge, on the scale of the population ditribution function. The limiting process $W( \cdot )$ in Proposition $\ref{weak-conv}$ is not a Browinian bridge, of course, although it is a Gaussian process. However, it shares with the Brownian bridge an important property: it possesses trajectories that are a.s. continuous.

\begin{proposition}
\label{continuous-trajectories}
If $F_0$ and $F_1$ are continuous, the limiting process $W ( \cdot ) = [ W_1 ( \cdot ) , \, W_0 ( \cdot ) ] $ possesses trajectories that are continuous with probability 1.
\end{proposition}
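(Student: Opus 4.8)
The plan is to deduce sample‑path continuity from the weak convergence already established in Proposition~\ref{weak-conv}, combined with the joint continuity of the covariance kernel $C(y,t)$. Recall from $(\ref{eq:emp_proc_interm2})$--$(\ref{eq:def-z})$ that $W(\cdot)=[W_1(\cdot),W_0(\cdot)]^T$ is the weak limit of the normalised sum $n^{-1/2}\sum_{i=1}^{n}[Z_{1,i}(\cdot),Z_{0,i}(\cdot)]^T$ of i.i.d.\ terms, so $W(\cdot)$ is a centred Gaussian process; being a \emph{tight} Borel limit in $l_2^{\infty}(\mathbb{R})$ (equivalently in $D[-\infty,+\infty]^{2}$), it is concentrated, with probability one, on the set of functions that are uniformly continuous for its own intrinsic semimetric $\rho(y,t)^{2}=E[\,\|W(y)-W(t)\|^{2}\,]=\sum_{j\in\{1,0\}}\big(C_{jj}(y,y)+C_{jj}(t,t)-2C_{jj}(y,t)\big)$, and $([-\infty,+\infty],\rho)$ is then totally bounded; this is the standard description of tight Gaussian limits in $\ell^{\infty}$‑type spaces. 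Consequently it suffices to show that the usual topology on $[-\infty,+\infty]$ is finer than the $\rho$‑topology, i.e.\ that $\rho(y,t)\to 0$ whenever $t\to y$; this reduces the whole statement to the joint continuity of the kernel entries $C_{jj}$, $C_{10}$, $C_{01}$ given in $(\ref{eq:e32})$--$(\ref{eq:e3b4})$.

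First I would prove that $C_{11},C_{00},C_{10},C_{01}$ extend to jointly continuous functions on $[-\infty,+\infty]^{2}$. Fix $(y_m,t_m)\to(y,t)$. For every $x\in\mathcal{X}$, continuity of $y\mapsto F_j(y\mid x)$ (which is already in force, being required for $W$ to exist, cf.\ Proposition~\ref{weak-conv}) together with continuity of $(y,t)\mapsto y\land t$ gives $F_j(y_m\land t_m\mid x)\to F_j(y\land t\mid x)$, $F_j(y_m\mid x)\to F_j(y\mid x)$, and so on, while continuity of the marginals $F_j$ gives the analogous convergences without conditioning; at $\pm\infty$ one uses $F_j(\pm\infty\mid x)\in\{0,1\}$. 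The integrands are uniformly bounded --- by $1/\delta$ for the terms carrying the factor $1/p_j(x)$, thanks to the common‑support bound $p_j(x)\ge\delta$ of assumption (iii), and by $1$ for the remaining products --- so dominated convergence yields convergence of every entry of $C$. The same computation shows $C_{jj}(y,y)\to 0$ as $y\to\pm\infty$, so there is no difficulty at the endpoints.

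With continuity of $C$ in hand, $\rho(y,t)^{2}$ is continuous on $[-\infty,+\infty]^{2}$ and vanishes on the diagonal, so $t\to y$ implies $\rho(y,t)\to 0$; by compactness this holds uniformly, i.e.\ the identity map from $([-\infty,+\infty],|\cdot|)$ to $([-\infty,+\infty],\rho)$ is uniformly continuous. Composing this map with the (a.s.\ uniformly $\rho$‑continuous) trajectories of $W$ shows that, with probability one, $W$ has trajectories that are uniformly continuous on $[-\infty,+\infty]$ in the usual sense; restricting to each $[-m,m]$ and taking a countable union of the exceptional null sets gives continuity on all of $\mathbb{R}$ with probability one. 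Since $W_1$ and $W_0$ are the coordinates of $W$, both have a.s.\ continuous paths, which is the assertion.

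The only genuinely technical point is the dominated‑convergence/uniform‑boundedness bookkeeping for the kernel, which is exactly where assumptions (ii) (compact $\mathcal{X}$) and (iii) (common support) enter; once $C$ is continuous, the passage to path continuity is a routine application of the sample‑continuity theorem for tight Gaussian processes. An essentially equivalent alternative would avoid the intrinsic‑semimetric characterisation altogether: pass to an almost sure representation of the convergence in $D[-\infty,+\infty]^{2}$, observe that the maximal jump of the pre‑limit process (in either the H\'{a}jek form $(\ref{eq:e5})$ or the Horvitz--Thompson form $(\ref{eq:e5b})$) is deterministically $O(n^{-1/2})$ on the event $\inf_i\widehat{p}_{j,n}(x_i)\ge\delta/2$, which has probability tending to one by Th.~\ref{th_consist_sieve}, and conclude from the continuity of the maximal‑jump functional at continuous functions that the limit has no jumps.
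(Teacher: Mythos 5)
Your argument is correct, but it follows a genuinely different route from the paper's. The paper proves the result by a time change: it sets $B_j(u)=W_j(Q_j(u))$ with $Q_j=F_j^{-1}$, reuses the increment bound $E[(W_j(t)-W_j(y))^2]\le c\,\vert F_j(t)-F_j(y)\vert$ already derived in the equicontinuity part of the proof of Proposition \ref{weak-conv} to get $E[(B_j(u)-B_j(v))^2]\le c\vert u-v\vert$, and then invokes the Leadbetter--Weissner continuity criterion for the Gaussian process $B_j$ on $(0,1)$. You instead invoke the characterization of tight Gaussian Borel limits in $\ell^\infty$-type spaces (a.s.\ uniform continuity with respect to the intrinsic $L^2$ semimetric, with a totally bounded index set), verify continuity of the covariance kernel by dominated convergence (using the bounds $1/p_j(x)\le 1/\delta$ and $F_j\le 1$), and then transfer continuity from the intrinsic semimetric to the usual topology of $\overline{\mathbb{R}}$ by a compactness argument. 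What your route buys is that it stays entirely within the van der Vaart--Wellner machinery already used for Proposition \ref{weak-conv}, avoiding both the quantile transform and the external Leadbetter--Weissner reference; what it costs is that your kernel-continuity step leans on continuity of the conditional d.f.s $F_j(\cdot\mid x)$, which is a hypothesis of Proposition \ref{weak-conv} rather than of the proposition as stated (you correctly note it is in force since $W$ is defined through that proposition), whereas the paper's increment inequality needs only the marginals $F_j$; you could streamline your proof and drop the dominated-convergence bookkeeping altogether by citing that same inequality, which bounds the intrinsic semimetric by $c(\vert F_1(t)-F_1(y)\vert+\vert F_0(t)-F_0(y)\vert)$ and makes the transfer to the Euclidean topology immediate. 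Your closing alternative (a.s.\ Skorokhod representation plus the $O(n^{-1/2})$ bound on the maximal jump of the pre-limit process) is also a viable third argument, though as written it is only a sketch.
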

\noindent {\bf Proof.} See Appendix.

\subsection{Differentiable functionals}
\label{sec:hadamard}

The result of Proposition $\ref{weak-conv}$ can be immediately extended to general Hadamard differentiable functionals of $( F_1 , \, F_0)$, again assuming the continuity of $F_0$, $F_1$.
Consider a general functional:
\begin{eqnarray}
\theta = \theta(F_1, \, F_0): \; l^{\infty}(\mathbb{R})^2 \rightarrow \mathbb{E} \nonumber
\end{eqnarray}
\noindent where $l^{\infty}(\mathbb{R})^2$ is equipped with the $sup$-norm metric and $\mathbb{E}$ is a normed space
equipped with a norm $\| \cdot \|_\mathbb{E}$. As seen in Proposition $\ref{continuous-trajectories}$, the limiting process
$W ( \cdot ) = ( W_1 ( \cdot ), \, W_0 ( \cdot ) )$ concentrates on $C(\mathbb{\overline{R}})^2$, where
$C(\mathbb{\overline{R}} )$ is the set of continuous functions on the extended real line $\overline{\mathbb{R}}$. Note that functions in
$C(\mathbb{\overline{R}} )$ are bounded.

The functional  $\theta$ is Hadamard differantiable at $(F_1, \, F_0)$ tangentially to $C(\mathbb{\overline{R}})^2$ if there exists a linear application
\begin{eqnarray}
\theta^{\prime}_{(F_1, \, F_0)}: \; C(\mathbb{\overline{R}})\times C(\mathbb{\overline{R}}) \rightarrow \mathbb{E} \nonumber
\end{eqnarray}
\noindent  such that:
\begin{eqnarray}
\left \| \frac{\theta \left ( (F_1, \, F_0) + t h_{t} \right ) - \theta(F_1, \, F_0)}{t} - \theta^{\prime}_{(F_1,\, F_0)}(h)
\right \|_{\mathbb{E}} \rightarrow 0 \;\; {\mathrm{as}} \; t \downarrow 0, \; \forall \, h_t \rightarrow h . \nonumber
\end{eqnarray}
Using Theorem 20.8 in \cite{vandervaart98}, we then have:
\begin{eqnarray}
\sqrt{n}\left ( \theta(\widehat{F}_1, \, \widehat{F}_0)-\theta(F_1, \, F_0) \right ) \overset{d}{\rightarrow}
\theta^{\prime}_{(F_1, \, F_0 )}(W) .
\end{eqnarray}
In general, since  $\theta^{\prime}_{(F_1, \, F_0 )}(W)$ is a linear functional of a Gaussian process, it is a Gaussian process, as well. In particular,
if $\theta$ is a real-valued functional, then
$\theta^{\prime}_{(F_1, \, F_0 )}(W)$ has a Gaussian distribution with zero expectation and variance
\begin{eqnarray}
\sigma_{\theta}^2 = E \left [ \theta^{\prime}_{(F_1, \, F_0 )}(W)^2 \right ] .
\label{eq:asymp_var}
\end{eqnarray}
For the sake of simplicity, let $\widehat{\theta}_n$ be equal to $ \theta(\widehat{F}_1, \, \widehat{F}_0)$. The above result can be rewritten as
\begin{eqnarray}
\sqrt{n} \left ( \widehat{\theta}_n - \theta \right ) \stackrel{d}{\rightarrow} N(0,\sigma^2_{\theta}) \;\; {\mathrm{as}} \; n \rightarrow \infty \label{eq:asympt_t}
\end{eqnarray}
\noindent where the asymptotic variance $\sigma^2_{\theta}$ is given by $( \ref{eq:asymp_var} )$.

\section{Subsampling approximation}
\label{subsampling_general}

Consider a functional $\theta = \theta ( F_1 , \, F_0 )$. In order to construct a confidence interval on the basis of $( \ref{eq:asympt_t} )$, a consistent estimate of the asymptotic variance $\sigma_\theta^2$ $( \ref{eq:asymp_var} )$ is necessary. Unfortunately, apart a few cases, this is not simple, because
 $\sigma_\theta^2$ could
depend on $F_1$, $F_0$ in a complicate way, and a direct estimation could not be possible. This is the case, for instance, of quantiles, that will be
dealt with in next section. Here we briefly present a simple approach based on subsampling.

Define $A_i=(X_i, T_i, Y_i)$, $i=1, \, \dots , \, n$, and  consider all the ${n}\choose{m}$ subsamples of size $m$ of $(A_1, \, \dots , \, A_n)$.
Let further $\widehat{\theta}_{m,l}$ be the statistic $\widehat{\theta} (\cdot)$ computed for the $l$-th subsample of size $m$.
Next, consider then the empirical distribution function of the ${n}\choose{m}$ quantities $\sqrt{m} ( \widehat{\theta}_{m,l} -
\widehat{\theta}_n )$. In symbols:
\begin{eqnarray}
R_{n,m}(u) = {{n}\choose{m}}^{-1} \sum_{l=1}^{{n}\choose{m}} I_{\left ( \sqrt{m} ( \widehat{\theta}_{m,l} -
\widehat{\theta}_n) \le u \right )} .
\label{eq:e60}
\end{eqnarray}

If:
\begin{itemize}
\item[U1.] $\sqrt{n} (\widehat{\theta}_{n}-{\theta})\overset{d}{\rightarrow} N(0,\sigma^2_{\theta})$;
\item[U2.] $m$ depends on $n$ in such a way that $m \rightarrow \infty$, $\frac{m}{n} \rightarrow 0 \;\; {\mathrm{as}} \; n \rightarrow \infty$;
\end{itemize}
\noindent then, using Th. 2.1 in \cite{politrom94}, we have
\begin{eqnarray}
R_{n,m}(u)\overset{p}{\rightarrow} \Phi \left ( \frac{u}{\sigma_{\theta}} \right ) {\mathrm{as}} \; n, \, m \rightarrow \infty
\label{eq:e61}
\end{eqnarray}
\noindent where $\Phi$ is the distribution function of the Gaussian $N(0,1)$ distribution. The convergence in (\ref{eq:e61}) is uniform in $u$.

Relationship  $( \ref{eq:e61} )$ tells us that $Pr(\sqrt{n} (\widehat{\theta}_{n}-{\theta}) \le u)$
can be (uniformly) approximated by $R_{n,m}(u)$, as $n$ and $m$ get large.
From the continuity and strict monotonicity of $\Phi$, it follows that the empirical quantile $R^{-1}_{n,m}(p)= \inf \{ u: \; R_{n,m}(u) \geq 1 \}$
converges in probability to the quantile of order $p$ of the distribution $N(0,\sigma^2_{\theta})$ $\forall p \in (0,1)$.

The number of subsamples of size $m$, ${n}\choose{m}$ in $( \ref{eq:e60})$ can be very large, and then $R_{n,m}$ could be difficult  to be computed. In this case a ``stochastic'' version of $R_{n,m}$ can be considered according to the following steps.
\begin{itemize}
\item[1.] Select $M$ independent subsamples of size $m$ from $(A_1,\dots,A_n)$.
\item[2.] Compute the corresponding values $\widehat{\theta}_{m,1}, \, \dots, \,  \widehat{\theta}_{m,M}$ of the statistic $\widehat{\theta}$.
\item[3.] Compute of the corresponding empirical distribution function:
\begin{eqnarray}
\widehat{R}_{n,m}(u)=\frac{1}{M} \sum_{l=1}^{M} I_{\left ( \sqrt{m} ( \widehat{\theta}_{m,l} - \widehat{\theta}_n) \leq u \right )}  .
\label{eq:edf_subs}
\end{eqnarray}
\end{itemize}

It can be easily verified that if $M \rightarrow \infty$,  $n,m \rightarrow \infty$ and $\frac{m}{n} \rightarrow 0$, then $\widehat{R}_{n,m}(u)$ has the same limiting behaviour as $R_{n,m}(u)$.
These results can be used to obtain confidence intervals for $\theta$ and for testing statistical hypotheses {\em via} inversion of confidence intervals.
In more detail, let
\begin{eqnarray}
\widehat{R}_{n,m}^{-1} (u) = \inf \{ u: \; \widehat{R}_{n,m}(u) \geq p  \} \nonumber
\end{eqnarray}
\noindent be the $p$th quantile of $\widehat{R}_{n,m}$. It is easy to show that the interval:
\begin{equation}
\left [ \widehat{\theta}_{n} -\frac{1}{\sqrt{n}} \widehat{R}^{-1}_{n,m} \left (1- \frac{\alpha}{2} \right ), \;
\widehat{\theta}_{n} - \frac{1}{\sqrt{n}} \widehat{R}^{-1}_{n,m} \left ( \frac{\alpha}{2} \right ) \right ]
\label{eq:e62}
\end{equation}
is confidence interval for $\theta$ of asymptotic level $1-\alpha$.

The confidence interval $( \ref{eq:e62} )$ can be also used for testing the hypothesis:
\begin{eqnarray*}
\begin{cases}
H_0: \theta=\theta_0\\
H_1: \theta \ne \theta_0\\
\end{cases}
\end{eqnarray*}
If $\theta_0$ is in the confidence interval, then $H_0$ is accepted, otherwise it is rejected. Clearly, this is a test of asymptotic significance level $1-\alpha$.

\section{Average and Quantile Treatment Effect}
\label{ATEandQTE}

The results obtained so far allow one to re-obtain, as special cases, results previously obtained by \cite{HIR03} and \cite{firpo07}. They are presented below.

\subsection{Average Treatment Effect}

The Average Treatment Effect (ATE, for short) is defined as:
\begin{eqnarray}
\tau = E [ Y_{(1)} ] - E [ Y_{(0)} ] = \int_{\mathbb{R}}  y \, d [ F_1 (y) - F_0 (y) ] . \label{eq:tau}
\end{eqnarray}
In the sequel, we will assume that $E [ Y_{(1)}^2 ] $ and $ E [ Y_{(0)}^2 ]$ are both finite.
As an estimator of $\tau$, consider
\begin{eqnarray}
\widehat{\tau} & = & \int_{- \infty}^{+ \infty} y \, d [ \widehat{F}_{1,n}(y) -  \widehat{F}_{0,n}(y) ] \nonumber \\
\, & = & \sum_{i=1}^{n} y_i w^{(1)}_{i,n} - \sum_{i=1}^{n} y_i w^{(0)}_{i,n} . \label{eq:tauhat}
\end{eqnarray}
\noindent where the weights $w^{(j)}_{i,n}$, $j=1, \, 0$ are given by $( \ref{eq:e6})$.

As it appears from  $( \ref{eq:tau} )$, $\tau$ is a linear functional of $( F_1 , \, F_0 )$ and hence Hadamard differentiable. An integration by parts shows
that the asymptotic distribution of $\widehat{\tau}$ coincides with that
\begin{eqnarray}
- \int_{- \infty}^{+ \infty} \left ( W_1 (y) - W_0 (y)  \right ) \, dy
\nonumber
\end{eqnarray}
\noindent that turns out to normal with zero mean and variance
\begin{eqnarray}
\sigma^2_{\tau} = \int_{- \infty}^{+ \infty} \int_{- \infty}^{+ \infty} \left \{ C_{11} (y, \, t ) - C_{10} (y, \, t ) - C_{01} (y, \, t )
+ C_{00} (y, \, t ) \right \} \, dy \, dt . \nonumber
\end{eqnarray}
It is not difficult to see that the estimator $\widehat{\tau}$ $( \ref{eq:tauhat} )$ is asymptotically equivalent to that introduced in \cite{HIR03}.

\subsection{Quantiles and Quantile Treatment Effect}
\label{sec:quantiles}

Let $Q_j(p)=F_j^{-1}(p)= \inf \{ y: \; F_1(y) \geq p \}$, $0 < p < 1$ be the quantile of order $p$
of $F_j$, $j=1, \, 0$. In the sequel, we will assume that $Q_1(p)$, $Q_0(p)$ are in the common support of $F_1$, $F_0$. Furthermore, we will
denote by $supp (F_j )$ the support of $F_j$, $j=1, \, 0$.

Suppose that $F_1$, $F_0$ are continuous with positive density functions $f_1$, $f_0$, respectively:
\begin{eqnarray}
f_j(y) = \frac{dF_j(y)}{dy} >0 \;\; \forall \, y \in supp(F_j) , \;\; j=1, \, 0 .
\nonumber
\end{eqnarray}
As a consequence of the above assumption, $F_j$ is strictly monotonic (in its support).

Consider now $p_1, p_2$ ($0  <p_1 < p_2 <1$) such that $Q_1(p_1), Q_0(p_1), Q_1(p_2), Q_0(p_2)$ lie in the common support of $F_1$, $F_0$.
It is intuitive to estimate the quantile $Q_j (p)$ by its ``empirical counterpart''
\begin{eqnarray}
\widehat{Q}_{j,n}(p) = \widehat{F}_{j,n}^{-1}(p) = \inf \{y: \; \widehat{F}_{j,n}(y) \geq p \}, \;\; j=1, \, 0.
\label{eq:e36}
\end{eqnarray}

Let now $\mathbb{D}$ be the set of the restrictions of the distribution functions in $\mathbb{R}$ to $[a,b]$, and let $D[a,b]$ be
the set of c\`{a}dl\`{a}g  functions  in $[a,b]$. From \cite{vandervaart98}, it is seen  that the map $
G \longmapsto G^{-1} $  (from $\mathbb{D} \subseteq D[Q(p_1), Q(p_2)] $ onto $ l^\infty(0,1))$ is Hadamard differentiable at $( F_1, \, F_0 )$ tangentially to $C[a,b]$ with derivative:
\begin{eqnarray}
h \longmapsto -\biggl(\frac{h}{f} \biggr) \circ F^{-1} . \nonumber
\end{eqnarray}

Using then Th. 20.8 in \cite{vandervaart98}, (cfr. \cite{dossgill92} for an equivalent approach), the process
\begin{eqnarray}
\left[ \begin{array}{cc}
\sqrt(n)(\widehat{Q}_{1,n}(p)-Q_1(p) \\
\sqrt(n)(\widehat{Q}_{0,n}(p)-Q_0(p))
\end{array} \right] , \;\; p \in [p_1,p_2]
\end{eqnarray}
\noindent converges weakly as $n \rightarrow \infty$ (on $ l^\infty(p_1,p_2) $ equipped with the $sup$-norm)  to a Gaussian process $Z(p)=[Z_1(p),Z_2(p)]^{\prime}$ defined as:
\begin{eqnarray}
Z(p)  =
\left[ \begin{array}{cc}
-\frac{W_1(Q_1(p))}{f_1(Q_1(p))}) \\
-\frac{W_0(Q_0(p))}{f_0(Q_0(p))}
\end{array} \right] , \;\; p \in [p_1,p_2] \label{eq:limit_quantile}
\end{eqnarray}

The stochastic process $( \ref{eq:limit_quantile} )$ is a Gaussian process with zero mean function and covariance kernel:
\[C_z(p,u)=\begin{bmatrix}
\frac{C_1(Q_1(p),Q_1(u))}{f_1(Q_1(p)) f_1(Q_1(u))} & \frac{C_{10}(Q_1(p),Q_0(u))}{f_1(Q_1(p)) f_0(Q_0(u))} \\
\frac{C_{01}(Q_0(p),Q_1(u))}{f_0(Q_0(p)) f_1(Q_1(u))}  & \frac{C_0(Q_0(p),Q_0(u))}{f_0(Q_0(p)) f_0(Q_0(u))} \\
\end{bmatrix} . \]
Note that $Z ( \cdot ) \overset{d}{=}-\mathbb{Z} ( \cdot )$ due to the symmetry of the Gaussian distribution.

In \cite{firpo07} the difference between corresponding quantiles:
\begin{eqnarray}
\varphi ( p) = Q_1 (p) - Q_0 (p) \label{eq:qte}
\end{eqnarray}
\noindent is considered. It is known as {\em Quantile Treatment Effect} (QTE, for short). From $( \ref{eq:e36})$ it is
intuitive to estimate $\varphi ( p) $ by
\begin{eqnarray}
\widehat{\varphi} ( p) = \widehat{Q}_{1,n}(p)-\widehat{Q}_{0,n}(p)  \label{eq:qte_estim}
\end{eqnarray}
The estimator $( \ref{eq:qte_estim} )$ is asymptotically equivalent to the estimator of QTE defined in
\cite{firpo07}. In fact,
from $( \ref{eq:limit_quantile} )$ it appears that
\begin{eqnarray}
\sqrt{n} ( \widehat{\varphi} ( p) - \varphi ( p) ) =
\sqrt{n}(\widehat{Q}_{1,n}(p)-\widehat{Q}_{0,n}(p)-(Q_1(p)-Q_0(p)))
\label{eq:e37}
\end{eqnarray}
\noindent tends in distribution, as $n$ goes to infinity, to a Gaussian distribution with zero mean and variance:
\begin{eqnarray}
V & = & \frac{C_1(Q_1(p),Q_1(p))}{f_1(Q_1(p))^2}+\frac{C_0(Q_0(p),Q_0(p))}{f_0(Q_0(p))^2 }-\frac{C_{10}(Q_1(p),Q_0(p))}{f_1(Q_1(p))f_0(Q_0(p))}-\frac{C_{01}(Q_0(p),Q_1(p))}{f_0(Q_0(p))f_1(Q_1(p))} \nonumber \\
& = & \frac{C_1(Q_1(p),Q_1(p))}{f_1(Q_1(p))^2}+\frac{C_0(Q_0(p),Q_0(p))}{f_0(Q_0(p))^2 }-2\frac{C_{10}(Q_1(p),Q_0(p))}{f_1(Q_1(p))f_0(Q_0(p))} \nonumber \\
& = & \frac{1}{f_1(Q_1(p))^2}\biggl\{ E_x \biggr[\frac{1}{p(x)}F_1(Q_1(p) \vert x)(1-F_1(Q_1(p) \vert x))\biggr] \nonumber \\
& + & E_x \biggr[(F_1(Q_1(p) \vert x)-F_1(Q_1(p)))^2\biggr]\biggr\}
\nonumber \\
& + & \frac{1}{f_0(Q_0(p))^2}\biggl\{ E_x \biggr[\frac{1}{1-p(x)}F_0(Q_0(p) \vert x)(1-F_0(Q_0(p) \vert x))\biggr] \nonumber \\
& + & E_x \biggr[(F_0(Q_0(p) \vert x)-F_0(Q_0(p)))^2 \biggr]\biggr\} \nonumber \\
& - & 2\frac{1}{f_0(Q_0(p))f_1(Q_1(p))}E_x \biggr[(F_1(Q_1(p) \vert x)-F_1(Q_1(p))(F_0(Q_0(p) \vert x)-F_0(Q_0(p))\biggr] \nonumber \\
& = & \frac{1}{f_1(Q_1(p))^2}E_x \biggr[\frac{1}{p(x)}F_1(Q_1(p) \vert x)(1-F_1(Q_1(p) \vert x))\biggr] \nonumber \\
& + & \frac{1}{f_0(Q_0(p))^2}E_x \biggr[\frac{1}{1-p(x)}F_0(Q_0(p) \vert x)(1-F_0(Q_0(p) \vert x))\biggr] \nonumber \\
& + & E_x \biggr[\biggl(\frac{F_1(Q_1(p) \vert x)-p}{f_1(Q_1(p))}-\frac{F_0(Q_0(p) \vert x)-p}{f_0(Q_0(p))}\biggr)^2\biggr] .
\label{eq:e38}
\end{eqnarray}
\noindent which coincides with the asymptotic variance of the estimator of QTE used in \cite{firpo07}.

\section{Confidence bands for $F_1$ and $F_0$}
\label{sec:CB}

 The aim of the present section is to construct a confidence bandwidth for $F_1$, $F_0$, assuming again that they are continuous d.f.s..
As seen in Proposition $\ref{continuous-trajectories}$, under this assumption the process $W(\cdot)=[W_0(\cdot), \, W_1(\cdot)]^{\prime}$ has a.s. continuous trajectories. Furthermore:
\begin{eqnarray}
W_j(y)\overset{q.c.}{\rightarrow}0 \;; {\mathrm{as}} \; y \rightarrow \pm \infty , \; j=1, \, 0. \nonumber
\end{eqnarray}
In other words, the trajectories of $W(\cdot)$ are continuous and bounded with probability 1.
From now on, we will also assume that the cross-covariance matrix $C(y, \, t) = E\bigl [W(y) \otimes W(t) \bigr ]$ is such that $C(y, \,y)$
is a positive-definite matrix, for every real $y$.
Under these conditions it is possible to show (\cite{lifs82}) that the functional:
$ \sup_{y} \vert W_j(y) \vert $
can only have an atom at the point
\begin{eqnarray}
\sup_{y: \, V( W_{j} (y) ) = 0}  E  \left [ \left \vert W_j (y) \right \vert \right ] = 0 \nonumber
\end{eqnarray}
\noindent and has absolutely continuous distribution on $(0, \, + \infty )$.
On the other hand, $V(  W_{j} (y)  ) = 0$ only when $y \rightarrow \pm \infty$, and, from Th. 8.1 in \cite{dudley73} it follows that
 $\sup_{\vert y \vert \leq M} \vert W_j (y) \vert $ has absolutely continuous distribution in $(0, \, + \infty )$, for every positive $M$. Hence
\begin{eqnarray}
P \left ( \sup_{y \in \mathbb{R}} \left \vert W_j (y) \right \vert > 0 \right ) \geq
\lim_{M \rightarrow \infty} P \left ( \sup_{y \in \mathbb{R}} \left \vert W_j (y) \right \vert > 0 \right ) = 1 \nonumber
\end{eqnarray}
\noindent which proves that the distribution of $\sup_{y} \left \vert W_j (y) \right \vert $ has no atom at $0$. In other terms,
$\sup_{y} \left \vert W_j (y) \right \vert $  has absolutely continuous distribution on $(0, \, + \infty )$.

The starting point to construct a confidence band of asymptotic level $1-\alpha$  for $F_j (\cdot)$ consists in considering the Kolmogorov statistic:
\begin{eqnarray}
\underset{y}{\sup} \vert \widehat{F}_j(y)-F_j(y) \vert . \nonumber
\end{eqnarray}

From Propositions $\ref{weak-conv}$, $\ref{continuous-trajectories}$, we obtain
\begin{eqnarray}
\lim_{n \rightarrow \infty} P \left ( \sup_{y} \sqrt{n} \left \vert \widehat{F}_j(y)-F_j(y)
\right \vert \leq x
\right ) = P \left ( \sup_{y} \left \vert W_1(y) \right \vert \leq x \right ) \;\; \forall \, x \in \mathbb{R}
\label{eq:conv_kolm}.
\end{eqnarray}

Let $d_{j,1-\alpha}$ be the ${1-\alpha}$ quantile of the distribution of $\sup \vert W_j(y) \vert$. As a consequence of the absolute continuity
 of $ \sup \vert W_j (y) \vert$, there is a unique   $d_{j,1-\alpha}$ satisfying:
\begin{eqnarray}
P \left ( \sup_{y} \vert W_j(y) \vert \leq d_{j, 1-\alpha} \right ) =1-\alpha . \nonumber
\end{eqnarray}
The quantile $d_{j,1-\alpha}$ depends on unknown quantities. It can be estimated by subsampling.
Using the notation introduced in Section $\ref{subsampling_general}$, define
\begin{eqnarray}
\widehat{\theta}_{j,n} = \sqrt{n} \sup_{y} \left \vert \widehat{F}_{j,n}(y) - F_j(y) \right \vert , \;\;
\widehat{\theta}_{j,m} = \sqrt{m} \sup_{y} \left \vert \widehat{F}_{j,m}(y) - \widehat{F}_{j,n}(y) \right \vert . \nonumber
\end{eqnarray}
The subsampling procedure can be shortly described as follows.
\begin{enumerate}
\item Select $M$ independent subsamples of size $m$ from $\{A_i=(X_i, T_i, Y_i), \: i= 1, \, \dots , \, n \}$.
\item Compute  the  values:
$$\widehat{\theta}_{j,m;l}= \sqrt{m} \sup_{y} \left \vert \widehat{F}_{j,m;l}(y)-F_{j,n}(y) \right \vert , \; l=1, \, \dots , \, M$$.
\item Compute the empirical distribution function:
$$\widehat{R}_{j,n,m}(u)=\frac{1}{M} \sum_{l=1}^{M} I_{(\widehat{\theta}_{m,j;l}\le u )} . $$
\item Compute the quantile:
$$\widehat{d}_{j,1-\alpha}=R^{-1}_{j,n,m}(1-\alpha) =
\inf_{u} \left \{ R_{j,n,m}(u) \ge 1-\alpha \right \} .$$
\end{enumerate}

Now, it is easy to see that:
$$\widehat{R}_{j,n,m}(u) \overset{p}{\rightarrow} P \left ( \sup_{y} \left \vert W_j(y) \right \vert \leq u \right ) \;\;
{\mathrm{as}} \,
 n, \, m, \, M \rightarrow \infty , \; \frac{m}{n} \rightarrow 0.$$
From the absolute continuity of the distribution of $\sup_{y} \left \vert W_j (y) \right \vert$, it also follows that:
$$\widehat{d}_{j,1-\alpha}= \inf \left \{ y: \: \widehat{R}_{j,n,m}(y) \geq 1-\alpha \right \}$$
tends in probability to $d_{j,1-\alpha}$. In symbols:
$$\widehat{d}_{j,1-\alpha} \overset{d}{\rightarrow} d_{1-\alpha} \;\;
{\mathrm{as}} \; n, \, m, \,M \rightarrow \infty , \; \frac{m}{n} \rightarrow 0, \; \forall \, 0<\alpha<1 .$$

Finally, from $( \ref{eq:conv_kolm} )$ we may conclude that
\begin{eqnarray}
1-\alpha & \simeq & P \left ( \left \vert \widehat{F}_{j,n}(y)-F_j(y) \right \vert \leq
\frac{\widehat{d}_{j,1-\alpha}}{\sqrt{n}} \; \forall \, y \in \mathbb{R} \right ) \nonumber \\
& = & P \left ( \widehat{F}_{j,n}(y)-\frac{\widehat{d}_{j,1-\alpha}}{\sqrt{n}} \leq F_j(y) \leq
\widehat{F}_{j,n}(y)+\frac{\widehat{d}_{j,1-\alpha}}{\sqrt{n}} \; \forall \, y \in \mathbb{R} \right ) \nonumber \\
& = &
P \left ( \max \left ( 0, \, \widehat{F}_{j,n}(y) - \frac{\widehat{d}_{j,1-\alpha}}{\sqrt{n}} \right ) \leq F_j(y) \leq  \min \left (
1, \, \widehat{F}_{j,n}(y) + \frac{\widehat{d}_{j,1-\alpha}}{\sqrt{n}} \right ) \; \forall y \in \mathbb{R} \right ) \nonumber
\end{eqnarray}
\noindent so that the region
\begin{eqnarray}
\left \{ \left [ \max \left ( 0, \, \widehat{F}_{j,n}(y) - \frac{\widehat{d}_{j,1-\alpha}}{\sqrt{n}} \right ) , \: \min
\left ( 1, \, \widehat{F}_{j,n}(y) + \frac{\widehat{d}_{j,1-\alpha}}{\sqrt{n}} \right ) \right ] ;  y \in \mathbb{R} \right \} 
\label{eq:e39a}
\end{eqnarray}
\noindent is a confidence bandwidth for $F_j (\cdot)$ of asymptotic level $1-\alpha$.

\section{Testing for the presence of a treatment effect: two (sub)sample Wilcoxon test}
\label{sec:Wilcoxon}

\subsection{Wilcoxon type statistic}
\label{sub1_wilcox}

In nonparametric statistics, a problem of considerable relevance consists in testing for the possible difference between two samples. Among several
proposals, the two-sample Wilcoxon (or Wilcoxon-Mann-Whitney) test plays a central role in applications, mainly because of its properties.
The goal of the present section is to propose a Wilcoxon type statistic to test for the possible difference between the (sub)sample of treated subjects
and the (sub)sample of untreated subjects. In other terms, we aim at developing a Wilcoxon type statistic to test for the possible difference between treated and
untreated subjects, {\em i.e.} for the possible presence of a treatment effect.

From now on, we will assume $F_0$ and $F_1$ are both continuous. As in the classical Wilcoxon two-sample test, in order to measure the difference between the distributions of $Y_{(1)}$ and $Y_{(0)}$, we consider
\begin{eqnarray}
\theta_{01} = \theta ( F_0 , \, F_1 ) = \int_{\mathbb{R}} F_0(y) \, dF_1(y) .
\label{eq:e39}
\end{eqnarray}
\noindent The parameter $\theta_{01}$ $( \ref{eq:e39} )$ possesses a natural interpretation, because it is equal to the probability
that a treated subject possesses a $y$-value greater than the $y$-value for an independent, untreated subject. A few properties of
$\theta_{01}$ are listed below.
\begin{enumerate}
\item [1)] $\theta_{01}$ depends only on the marginal d.f.s $F_0$, $F_1$ (not on the way $Y_{(0)}$, $Y_{(1)}$ are associated in the same subject).
\item [2)] If $F_0 = F_1$ then $\theta_{01}=\frac{1}{2}$;
\item [3)] Using $\theta_{01}$ is equivalent to use $\theta_{10}=\int F_1(y) \, dF_0(y)$, as it it seen by an integration by parts.
\item [4)] If $F_1(y) \leq F_0(y)$ $\forall \, y \in \mathbb{R}$, {\em i.e.} if $Y_{(1)}$ is {\em stochastically larger} than $Y_{(0)}$, then:
$$\theta_{01}=1- \int_{\mathbb{R}} F_1(y) \, dF_0(y)  \geq 1-\int_{\mathbb{R}} F_0(y) \, dF_0(y) =\frac{1}{2} .$$
\end{enumerate}

The Wilcoxon type statistic we consider here is obtained in two steps, essentially by a plug-in approach.

\begin{enumerate}
\item [Step 1.] Estimation of the marginal d.f.s $F_1$, $F_0$:
\begin{eqnarray}
\widehat{F}_{j,n}(y)=\sum_{i=1}^{n} w^{(j)}_{i,n}I_{(Y_i \le y)}, \ w^{(j)}_{i,n}=\frac{I_{(T_i=1)}/\widehat{p}_{j,n}(x_i) }{\sum_{k=1}^{n} I_{(T_k=1)} /
\widehat{p}_{j,n} (x_k) } , \;\; j=1, \, 0.
\label{eq:e40}
\end{eqnarray}
\item [Step 2.] Estimation of $\theta_{01}$:
\begin{eqnarray}
\widehat{\theta}_{01,n} & = & \theta ( \widehat{F}_0 , \, \widehat{F}_1 ) \nonumber \\
& = & \int_{\mathbb{R}}
\widehat{F}_{0,n}(y) \, d \widehat{F}_{1,n}(y) \nonumber  \\
& = & \sum_{i=1}^{n} \sum_{k=1}^{n} w^{(1)}_{i,n} w^{(0)}_{k,n} I_{(y_{k} \leq y_{i})} .
\label{eq:e41}
\end{eqnarray}
\end{enumerate}
Note that
$w^{(1)}_{i,n} w^{(0)}_{k,n} \neq 0 $ if and only if (iff) $(I_{(T_i=1)}=1) \land (I_{(T_k=0)}=1)$, {\em i.e.} iff $i$ is treated and $k$ is untreated.
This essentially shows that $\widehat{\theta}_{01}$ is based on the comparison \textit{treated/untreated}.

The limiting distribution of the statistic $( \ref{eq:e41} )$ is obtained as a consequence of Proposition $\ref{weak-conv}$.

\begin{proposition}
\label{asymptotics_wilcoxon}
Assume that the conditions of Proposition $\ref{weak-conv}$ are fulfilled. Then
\begin{eqnarray}
\sqrt{n} ( \widehat{\theta}_{01,n}-\theta_{01}) \stackrel{d}{\rightarrow} N (0, \, V ) \;\; {\mathrm{as}} \; n \rightarrow \infty
\label{eq:e42}
\end{eqnarray}
\noindent where
\begin{eqnarray}
V =
E_x \left [ \frac{1}{p(x)} V \left ( F_0(Y_1) \vert x \right ) \right ]
+ E_x \left [ \frac{1}{1-p(x)} V \left ( F_1(Y_0) \vert x \right ) \right ]
+ V_x \left ( \gamma_{10}(x) - \gamma_{01}(x) \right )
\label{eq:e55}
\end{eqnarray}
\noindent and
\begin{eqnarray}
\gamma_{10}(x) = E[F_1(Y_0) \vert x] = \int_{\mathbb{R}} F_1(t \vert x) \, dF_0(t)  , \;\;
\gamma_{01}(x) = E[ F_0(Y_1) \vert x ] = \int_{\mathbb{R}}  F_0 (y \vert x) \, dF_1(y) .
\label{eq:def_gamma}
\end{eqnarray}
\end{proposition}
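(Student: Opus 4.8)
The plan is to apply the functional delta method (Theorem 20.8 of \cite{vandervaart98}) to the bilinear functional $\theta(F_0,F_1)=\int F_0\,dF_1$, using the weak convergence of $W_n(\cdot)$ established in Proposition \ref{weak-conv}. First I would verify that $\theta$ is Hadamard differentiable at $(F_0,F_1)$ tangentially to $C(\overline{\mathbb{R}})^2$. Writing $\theta(F_0+th_0,F_1+th_1)=\int (F_0+th_0)\,d(F_1+th_1)$ and expanding, an integration by parts on the cross term $\int h_0\,dF_1$ (legitimate because $F_1$ is continuous and $h_0,h_1$ are continuous and bounded on $\overline{\mathbb{R}}$, vanishing at $\pm\infty$) shows
\begin{eqnarray}
\theta^{\prime}_{(F_0,F_1)}(h_0,h_1) = \int_{\mathbb{R}} h_0(y)\,dF_1(y) + \int_{\mathbb{R}} h_1(y)\,dF_0(y)
= \int_{\mathbb{R}} F_0\,dh_1 + \int_{\mathbb{R}} F_1\,dh_0 , \nonumber
\end{eqnarray}
up to the sign conventions coming from integration by parts; the $t^2$ term is $t\int h_0\,dh_1=O(t)$ and is negligible. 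This derivative is linear and continuous in the sup-norm, so Hadamard differentiability holds.

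Next, by the delta method, $\sqrt{n}(\widehat{\theta}_{01,n}-\theta_{01})$ converges in distribution to $\theta^{\prime}_{(F_0,F_1)}(W_0,W_1)=\int W_0\,dF_1+\int W_1\,dF_0$, which is a continuous linear functional of the Gaussian process $W$, hence Gaussian with mean zero. The remaining work is to compute its variance and identify it with the expression $(\ref{eq:e55})$. The cleanest route is to go back to the asymptotic linearization $(\ref{eq:emp_proc_interm2})$–$(\ref{eq:def-z})$: substituting the representations $W_{j,n}(y)=\frac{1}{\sqrt n}\sum_i Z_{j,i}(y)+o_p(1)$ into $\int W_{0,n}\,dF_1+\int W_{1,n}\,dF_0$ and interchanging sum and integral, one gets $\sqrt n(\widehat\theta_{01,n}-\theta_{01})=\frac{1}{\sqrt n}\sum_i \xi_i + o_p(1)$ where
\begin{eqnarray}
\xi_i = \int Z_{0,i}(y)\,dF_1(y) + \int Z_{1,i}(y)\,dF_0(y) . \nonumber
\end{eqnarray}
Plugging in $(\ref{eq:def-z})$ and using $\int F_0(y\vert x_i)\,dF_1(y)=\gamma_{01}(x_i)$, $\int F_1(t\vert x_i)\,dF_0(t)=\gamma_{10}(x_i)$, the term $\xi_i$ decomposes into an "$I_{(T_i=1)}$-part" contributing $\frac{I_{(T_i=1)}}{p(x_i)}(F_0(Y_i)-\gamma_{01}(x_i))$ plus $\gamma_{01}(x_i)-\theta_{01}$, a symmetric "$I_{(T_i=0)}$-part" contributing $\frac{I_{(T_i=0)}}{1-p(x_i)}(F_1(Y_i)-\gamma_{10}(x_i))+\gamma_{10}(x_i)-\theta_{10}$, and the propensity-correction terms. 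Taking the variance of $\xi_i$ and using the CIA together with $E[I_{(T_i=1)}\vert x]=p(x)$, the cross terms between the treated-part and the control-part vanish (conditionally on $x$ they involve $E[(I_{(T_i=1)}-p)(I_{(T_i=0)}-\cdot)\vert x]$-type products that telescope), leaving exactly the three pieces in $(\ref{eq:e55})$: $E_x[\frac{1}{p(x)}V(F_0(Y_1)\vert x)]$, $E_x[\frac{1}{1-p(x)}V(F_1(Y_0)\vert x)]$, and the between-strata variance $V_x(\gamma_{10}(x)-\gamma_{01}(x))$, where $\theta_{01}=\theta_{10}$ by property 3).

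The main obstacle I expect is the bookkeeping in the variance computation — specifically checking that the centering is consistent (that $E[\xi_i]=0$, using $E_x[\gamma_{01}(x)]=E_x[\gamma_{10}(x)]=\theta_{01}=\theta_{10}$) and that all cross-covariances between the $T_i=1$ and $T_i=0$ contributions, and between the indicator parts and the propensity-score-correction parts, cancel under the conditional independence assumption. A secondary, more technical point is justifying the interchange of the integral $\int\cdot\,dF_j$ with the limit/weak-convergence operations; this is handled by the continuity of $F_0,F_1$ (so that the limiting trajectories live in $C(\overline{\mathbb{R}})^2$ by Proposition \ref{continuous-trajectories}) and by the finite-variance bounds inherited from Proposition \ref{weak-conv}, together with the uniform $o_p(1)$ control over compact $y$-sets noted after $(\ref{eq:emp_proc_interm2})$, plus a standard tail argument to push the convergence to all of $\mathbb{R}$.
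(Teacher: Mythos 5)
Your overall strategy is viable but genuinely different from the paper's. You get the limit law from the functional delta method, i.e.\ Hadamard differentiability of the bilinear map $(F_0,F_1)\mapsto\int F_0\,dF_1$ (essentially Lemma 20.10 in \cite{vandervaart98}) applied to the weak convergence of Proposition \ref{weak-conv}, and you compute the asymptotic variance from the i.i.d.\ linearization, taking the variance of the influence term $\xi_i$ built from the $Z_{j,i}$ of $(\ref{eq:def-z})$ and splitting it as $E_x[V(\xi_i\vert x)]+V_x(E[\xi_i\vert x])$. The paper instead works directly on $\sqrt n(\widehat\theta_{01,n}-\theta_{01})$: an exact integration by parts produces the decomposition $(\ref{eq:e44})$, whose degenerate quadratic term $\int W_{0,n}\,d(n^{-1/2}W_{1,n})$ is killed by a Skorokhod-representation plus Helly--Bray argument, and the variance is then obtained by integrating the covariance kernels $C_{jk}$ of the limit process ($V=V_1+V_2-2V_3$). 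The two variance computations are Fubini rearrangements of each other, so your route can land on $(\ref{eq:e55})$; what the delta method buys is that the quadratic remainder is absorbed into the differentiability statement, while the paper's route makes that step explicit. Note, however, that your dismissal of the remainder as ``$t\int h_0\,dh_1=O(t)$'' is precisely where the work sits: $h_{1,t}$ is only uniformly convergent and has total variation of order $1/t$, so you must either invoke the bounded-variation domain restriction and the Helly--Bray argument inside van der Vaart's lemma, or reproduce the paper's argument for $(\ref{eq:e49})$.

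There is one concrete error you must fix: the sign of the derivative is not a matter of convention. Expanding $\theta(F_0+th_0,F_1+th_1)$ and integrating by parts (legitimate since $h_1\in C(\overline{\mathbb{R}})$ vanishes at $\pm\infty$) gives
\begin{eqnarray}
\theta^{\prime}_{(F_0,F_1)}(h_0,h_1)=\int_{\mathbb{R}} h_0\,dF_1+\int_{\mathbb{R}} F_0\,dh_1
=\int_{\mathbb{R}} h_0\,dF_1-\int_{\mathbb{R}} h_1\,dF_0 , \nonumber
\end{eqnarray}
so the limit is $\int W_0\,dF_1-\int W_1\,dF_0$, as in the paper. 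Your displayed formula $\int h_0\,dF_1+\int h_1\,dF_0$ is wrong, and the claimed equality with $\int F_0\,dh_1+\int F_1\,dh_0$ is also wrong (those two expressions are negatives of each other). This is not cosmetic: carried through your own decomposition, the ``$+$'' version makes the conditional mean of $\xi_i$ equal to $(\gamma_{01}(x)-\theta_{01})+(\gamma_{10}(x)-\theta_{10})$, so the between-strata term becomes $V_x(\gamma_{10}+\gamma_{01})$ rather than the $V_x(\gamma_{10}-\gamma_{01})$ of $(\ref{eq:e55})$ --- these differ by $4\,\Cov_x(\gamma_{10},\gamma_{01})$, which is nonzero in general. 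With the corrected sign the control part enters $\xi_i$ with conditional mean $\theta_{10}-\gamma_{10}(x)$, the conditional variances give the two inverse-propensity-weighted terms, and the cross covariance $E_x[(\gamma_{10}(x)-\theta_{10})(\gamma_{01}(x)-\theta_{01})]$ (which does \emph{not} vanish, contrary to your phrasing) is exactly what assembles the three pieces into $V_x(\gamma_{10}(x)-\gamma_{01}(x))$.
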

\noindent {\bf Proof.} See Appendix.

\subsection{Variance estimation}
\label{ssec:W_var_est}

The asymptotic variance $V$ appearing in $( \ref{eq:e55} )$ contains unknown terms, that can be consistently estimated on the basis of
sample data.
In particular, the estimation of $\gamma_{01}(x) = E [ I_{(T=1)} p(x)^{-1} F_0(Y) \vert x ]$ can be simply developed by considering the regression of
 $$\frac{I_{(T_i=1)}}{\widehat{p}_n(x_i)}\widehat{F}_{0n}(Y_i) , \;\; i = 1, \, \dots , \, n$$
  on $x_i$, $i = 1, \, \dots , \, n$, and to estimate the regression function by a method ensuring consistency ({\em e.g.} local polynomials, Nadaraya-Watson kernel regression, spline).
The resulting estimator $\widehat{\gamma}_{01,n}(x)$ is uniformly consistent on compact sets of $x$s under few regularity conditions.
In the same way, $\gamma_{10}(x)$ can be consistently estimated by $\widehat{\gamma}_{10,n}(x)$, say. As a consequence the term $V_x(\gamma_{10}(x)-\gamma_{01}(x))$ can be estimated by:
\begin{eqnarray}
\widehat{V}_{a,n} & = & \frac{1}{n}\sum_{i=1}^{n}
\left ( \widehat{\gamma}_{10,n}(x_i) - \widehat{\gamma}_{01,n}(x_i)
- \left ( \frac{1}{n}\sum_{i=1}^{n} ( \widehat{\gamma}_{10,n}(x_i) - \widehat{\gamma}_{01,n}(x_i) \right ) \right )^2 .
\label{eq:e56}
\end{eqnarray}
Note that as an alternative estimator, one could consider:
\begin{eqnarray}
\widehat{\widehat{V}}_{a,n}=\frac{1}{n}\sum_{i=1}^{n} \left ( \widehat{\gamma}_{10,n}(x_i)
- \widehat{\gamma}_{01,n}(x_i)\right )^2 - \left ( 1-\widehat{\theta}_{01,n} \right )^2 .
\nonumber
\end{eqnarray}
In the second place, we have to estimate
\begin{eqnarray}
E_x \left [ \frac{1}{p(x)} V(F_0(Y_1) \vert x) \right ]  =
E_x \left [ \frac{1}{p(x)} E[F_0(Y_1)^2 \vert x] \right ]
- E_x \left [ \frac{1}{p(x)} \gamma_{01}(x)^2 \right ] .
\label{eq:e58}
\end{eqnarray}
The term $E_x [ p(x)^{-1} \gamma_{01}(x)^2 ]$ can be estimated with
\begin{eqnarray}
\frac{1}{n} \sum_{i=1}^{n} \left [ \frac{1}{\widehat{p}_n(x_i)} \widehat{\gamma}_{01,n}(x_i)^2 \right ] .
\nonumber
\end{eqnarray}
The term:
\begin{eqnarray}
M_{01}(x) = E[F_0(Y_1)^2 \vert x]
= E \left [ \left . \frac{I_{(T=1)}}{p(x)}F_0(Y)^2 \right \vert x \right ] \nonumber
\end{eqnarray}
\noindent can be estimated by means of a non parametric regression of:
\begin{eqnarray}
\frac{I_{(T=1)}}{\widehat{p}_n(x_i)} \widehat{F}_{0,n}(Y_i)^2
\nonumber
\end{eqnarray}
\noindent
with respect to $x_i$s. The resulting estimator $\widehat{M}_{01,n}(x)$, say, is consistent under few conditions.
In the same way, an estimator $\widehat{M}_{10,n}(x)$ of
\begin{eqnarray}
M_{10}(x) = E[F_1(Y_0)^2 \vert x] = E \left [ \left . \frac{I_{(T=0)}}{1-p(x)} F_1(Y)^2 \right \vert x \right ] \nonumber
\end{eqnarray}
\noindent can be obtained.

The asymptotic variance of $\widehat{\theta}_{10,n}$ can be finally estimated by:
\begin{eqnarray}
\widehat{V}_n & = & \frac{1}{n}
\sum_{i=1}^{n}\frac{1}{\widehat{p}_{1,n} (x_i)} \left \{ \widehat{M}_{01,n}(x_i) - \widehat{\gamma}_{01,n}(x_i)^2 \right \} \nonumber \\
& + & \frac{1}{n}\sum_{i=1}^{n}\frac{1}{\widehat{p}_{0,n} (x_i)} \left \{
\widehat{M}_{10,n}(x_i) - \widehat{\gamma}_{10,n}(x_i)^2 \right \} + \widehat{V}_{a,n} .
\label{eq:e59}
\end{eqnarray}

\subsection{Testing the equality of $F_1$ and $F_0$ via Wilcoxon type statistic}
\label{ssec:W_subs}

A test for the equality of $F_1$ and $F_0$ can be constructed {\em via} the statistic $\widehat{\theta}_{01,n}$
$( \ref{eq:e41} )$. As already seen, when $F_1$ and $F_0$ coincide, $\theta_{01}$ is equal to $1/2$. Hence, the idea is to construct a test for
the hypotheses problem
\begin{eqnarray*}
\begin{cases}
H_0: \theta_{01}=\frac{1}{2}\\
H_1: \theta_{01}\ne\frac{1}{2}\\
\end{cases}
\end{eqnarray*}

On the basis of Proposition $\ref{asymptotics_wilcoxon}$, and the variance estimator $( \ref{eq:e59} )$, the region
\begin{eqnarray}
\sqrt{n} \left \vert \frac{\widehat{\theta}_{01,n}-\frac{1}{2}}{\sqrt{\widehat{V}_n}}
\right \vert \leq z_{\frac{\alpha}{2}} \label{eq:accept_wilcox_norm}
\end{eqnarray}
\noindent (where $z_{\frac{\alpha}{2}}$ is the $(1-\frac{\alpha}{2})$ quantile of the standard Normal distribution) is an acceptance region of asymptotic
significance level $\alpha$.

Alternatively, one could approximate the quantiles of the distribution of $\widehat{\theta}_{01,n}$ by subsampling, as outlined in Section
$ \ref{subsampling_general} $. Using the notation introduced for subsampling, it is seen that the acceptance region
\begin{eqnarray}
\left [ \widehat{\theta}_{01,n} - \frac{1}{\sqrt{n}} R^{-1}_{n,m} \left ( 1-\frac{\alpha}{2} \right )
, \: \widehat{\theta}_{01,n} - \frac{1}{\sqrt{n}} R^{-1}_{n,m} \left ( \frac{\alpha}{2} \right ) \right ]
\label{eq:conf_int_wilcox_subsamp}
\end{eqnarray}
\noindent is a confidence interval for $\theta_{01}$ of asymptotic level $1- \alpha$. Hence, the test consisting in rejecting $H_0$ whenever
the interval $( \ref{eq:conf_int_wilcox_subsamp} )$ does not contain $1/2$, possesses asymptotic significance level  $\alpha$.

\section{Testing for stochastic dominance}
\label{sec:dominance}

In evaluating the effect of a treatment, it is sometimes of interest to test wether the treatment itself has an effect on the {\em whole} distribution
function of $Y$, {\em i.e.} wether the treatment improves the behaviour of the whole d.f. of $Y$. Various forms of stochastic dominance are discussed in
\cite{mcfad:89}, \cite{anderson96}.
In particular, in the present section we will focus
on testing for first-order stochastic dominance.
The d.f. $F_1$ first-order stochastically dominates $F_0$ if $F_1(y) \leq F_0(y)$ $\forall \, y \in \mathbb{R}$. Our main goal is to construct a
test for the (uni-directional) hypotheses
\begin{eqnarray*}
\begin{cases}
H_0: \Delta(y) \le 0 \ \forall y \in \mathbb{R}\\
H_1: \Delta(y) > 0 \ \text{for  at  least one}  \ y >0\\
\end{cases}
\end{eqnarray*}
where $\Delta(y)=F_1(y)-F_0(y)$.

In econometrics and statistics, there is an extensive amount of literature on testing for stochastic dominance, since the
papers by  \cite{anderson96}, \cite{daviduclos00}. In \cite{linton05} a Kolmogorov-Smirnov type test is proposed, and a method
to construct critical values based on subsampling is proposed. For further bibliographic reference, and a deep analysis of contributions
to testing for stochastic dominance, cfr. the recent paper by \cite{donaldshu16}.

In the present paper, we confine ourselves to a simple, intuitive procedure to test for uni-directional dominance.
A simple idea to construct a test for the above hypotheses problem is to invert a confidence region for $\Delta ( \cdot )$. The null hypothesis $H_0$ is rejected whenever the confidence region has empty intersection with $H_0$. More formally, the test procedure we consider here is defined as follows.
\begin{itemize}
\item [(i)] Compute a confidence region for $\Delta(\cdot)$ of (at least asymptotic) level $1-\alpha$;
\item [(ii)] Reject $H_0$ if the confidence region for $\Delta(\cdot)$  and $H_0$ are disjoint,
that is if for at least a real $y$ the region has lower bound greater than zero.
\end{itemize}

From now on, we will assume that both $F_0$, $F_1$ are continuous d.f.s.
Using the arguments in Section $\ref{sec:CB}$, it is possible to see that the r.v.
\begin{eqnarray}
\sup_y \left ( W_1(y) - W_0(y) \right ) \nonumber
\end{eqnarray}
\noindent has absolutely continuous distribution, with $P \left (
\sup_y \left ( W_1(y)-W_0(y) \right ) \geq 0 \right ) =1$.
Hence, there exists  a single $d_{1-\alpha}$ such that
\begin{eqnarray}
P \left ( \sup_y \left ( W_1(y)-W_0(y) \right ) \leq d_{1-\alpha} \right ) = 1 - \alpha , \;\;  0 < \alpha < 1. \nonumber
\end{eqnarray}

The quantile $d_{1-\alpha}$ can be estimated by subsampling, as outlined in Section $\ref{subsampling_general}$.
Define
\begin{eqnarray}
\widehat{\Delta}_n (y) = \widehat{F}_{1,n} (y) - \widehat{F}_{0,n} (y) . \nonumber
\end{eqnarray}
A subsampling procedure to estimate $d_{1-\alpha}$ is described below.

\begin{enumerate}
\item Select $M$ independent subsamples of size $m$ from the sample of $(X_i,T_i,Y_i) $s, $i=1, \, \dots , \, n$.
 \item Compute the subsample statistics
\begin{eqnarray}
 \widehat{\theta}_{m,l}=\sqrt{m} \sup_y \left ( \widehat{F}_{1,m;l}(y)
 -\widehat{F}_{0,m;l}(y)  -  ( \widehat{F}_{1,n}(y) - \widehat{F}_{0,n}(y) \right ), \;\; l=1, \, \dots , \, M . \nonumber
\end{eqnarray}
 \item Compute the corresponding empirical d.f.
\begin{eqnarray}
\widehat{R}_{n,m}(u) = \frac{1}{M} \sum_{l=1}^{M} I_{( \widehat{\theta}_{m,l} \le u)} . \nonumber
\end{eqnarray}
 \item Compute the corresponding quantile
\begin{eqnarray}
\widehat{d}_{1-\alpha} = \widehat{R}_{n,m}^{-1}(u) = \inf \left \{ u: \; \widehat{R}_{n,m}(u) \ge 1-\alpha \right \} . \nonumber
\end{eqnarray}
\end{enumerate}

The arguments in Section $\ref{sec:CB}$ show that
\begin{eqnarray}
& \, & \widehat{R}_{n,m}(u) \overset{p}{\rightarrow} P \left ( \sup \left ( W_1(y) - W_0(y) \right ) \leq u \right ) \;\;
 \forall \, u \in \mathbb{R} , \;\;
{\mathrm{as}} \; n, \, m, \, M \rightarrow  \infty , \;  \frac{m}{n} \rightarrow 0 ; \nonumber \\
& \, & \widehat{d}_{1-\alpha} \overset{p}{\rightarrow} d_{1-\alpha} \;\;
 \forall \, 0 < \alpha < 1 , \;\;
{\mathrm{as}} \; n, \, m, \, M \rightarrow  \infty , \;  \frac{m}{n} \rightarrow 0 . \nonumber
\end{eqnarray}
Hence, the asymptotically exact approximation
\begin{eqnarray}
1-\alpha & \simeq & P \left ( \sup_y \left ( W_1(y)-W_0(y) \right ) \leq \widehat{d}_{1-\alpha} \right ) \nonumber \\
& \simeq & P \left ( \sup_y \sqrt{n} \left ( \widehat{\Delta}_n(y) - \Delta(y) \right ) \leq \widehat{d}_{1-\alpha} \right ) \nonumber \\
& = & P \left ( \Delta(y)  \geq \widehat{\Delta}_n(y) - \frac{\widehat{d}_{1-\alpha}}{\sqrt{n}} \; \forall \, y \in \mathbb{R} \right ) \nonumber
\end{eqnarray}
\noindent holds. As a consequence, the region
\begin{eqnarray}
\left \{ \left [ \widehat{\Delta}_n(y) - \frac{\widehat{d}_{1-\alpha}}{\sqrt{n}}; \: + \infty \right ) , \; y  \in \mathbb{R}  \right \} \nonumber
\end{eqnarray}
\noindent is a confidence region for $\Delta(\cdot)$ with asymptotic level $1-\alpha$. The null hypothesis $H_0$ is rejected whenever:
\begin{eqnarray}
\widehat{\Delta}_n (y) - \frac{\widehat{d}_{1- \alpha}}{\sqrt{n}} > 0 \; {\mathrm{for \; some}} \; y  \in \mathbb{R} .
\label{eq:stdom}
\end{eqnarray}
The performance of the testing procedure developed so far will be evaluated by simulation in Section $\ref{sec:simulation}$.

\section{A simulation study}
\label{sec:simulation}

The goal of the present section is to study by simulation the performance of the proposed methods for finite sample sizes.
In particular, estimation of $F_{j}$s and related hypotheses tests are studied under two scenarios: $(i)$ there is no treatment effect, {\em i.e.} $F_{1} $ coincides with $F_{0}$; $(ii)$ there is treatment effect, {\em i.e.} $F_{1} \neq F_0$.
\

$N=1000$ replications with samples sizes $n=1000$ and $n=5000$ have been generated by Monte Carlo simulation. The propensity score has been estimated {\em via}
the estimator considered in Th. $\ref{th_consist_sieve}$; the term $K$ has been chosen through least squares cross-validation.
As far as subsample approximation is concerned, $M=1000$ subsamples of size $m=100$ ($m=500$) have been drawn by simple random sampling from each of the $N=1000$ original samples of
size $n=1000$ ($5000)$.
\

In scenario $(i)$ (absence of treatment effect) the potential outcome $Y_{(j)}$ is specified as
\begin{eqnarray}
Y_{j}= 70 + 10  X + U_{j} , \;\; j=1, \, 0 \label{eq:noeffect}
\end{eqnarray}
\noindent where $X$ has a Bernoulli distribution with success probability $1/2$ ($X \sim Be (1/2))$ and $U_j$ has a uniform distribution
in the interval $[ - 10 , \, 10]$ ($U_j \sim U (-10 , \, 10)$). The r.v.s $U_1$, $U_0$ are mutually independent.
Clearly, $\theta_{01} = 1/2$, $E[Y_{(0)}]=E[Y_{(1)}]=75$, and $ATE=0$.

The exact distribution function of $Y_{(j)}$ is
\begin{eqnarray}
F_{j}(y) =
 \left \{\begin{array}{ll}
0& y<60 \\
 \frac{y-60}{40} \ (\frac{1}{2} \cdot \frac{y-60}{20})& 60 \leq y<70\\
 \frac{y-65}{20} \ (\frac{1}{2} \cdot \frac{y-60}{20}+\frac{1}{2} \cdot \frac{y-70}{20})& 70 \leq y<80 \\
\frac{y-50}{40}  \ (\frac{1}{2}+ \frac{1}{2} \cdot \frac{y-70}{20}) & 80 \leq y<90\\
1& y \geq 90 \\
\end {array} \right.
, \;\; j=1, \, 0. \label{eq:df_noeffect}
\end{eqnarray}
The d.f. $F_j$ $( \ref{eq:df_noeffect})$, and the corresponding density function $f_j$, are depicted in  Fig. $\ref{fig:Fvera}$.

\begin{figure}[htbp]
	\centering
	\includegraphics[height=3in, width=5in]{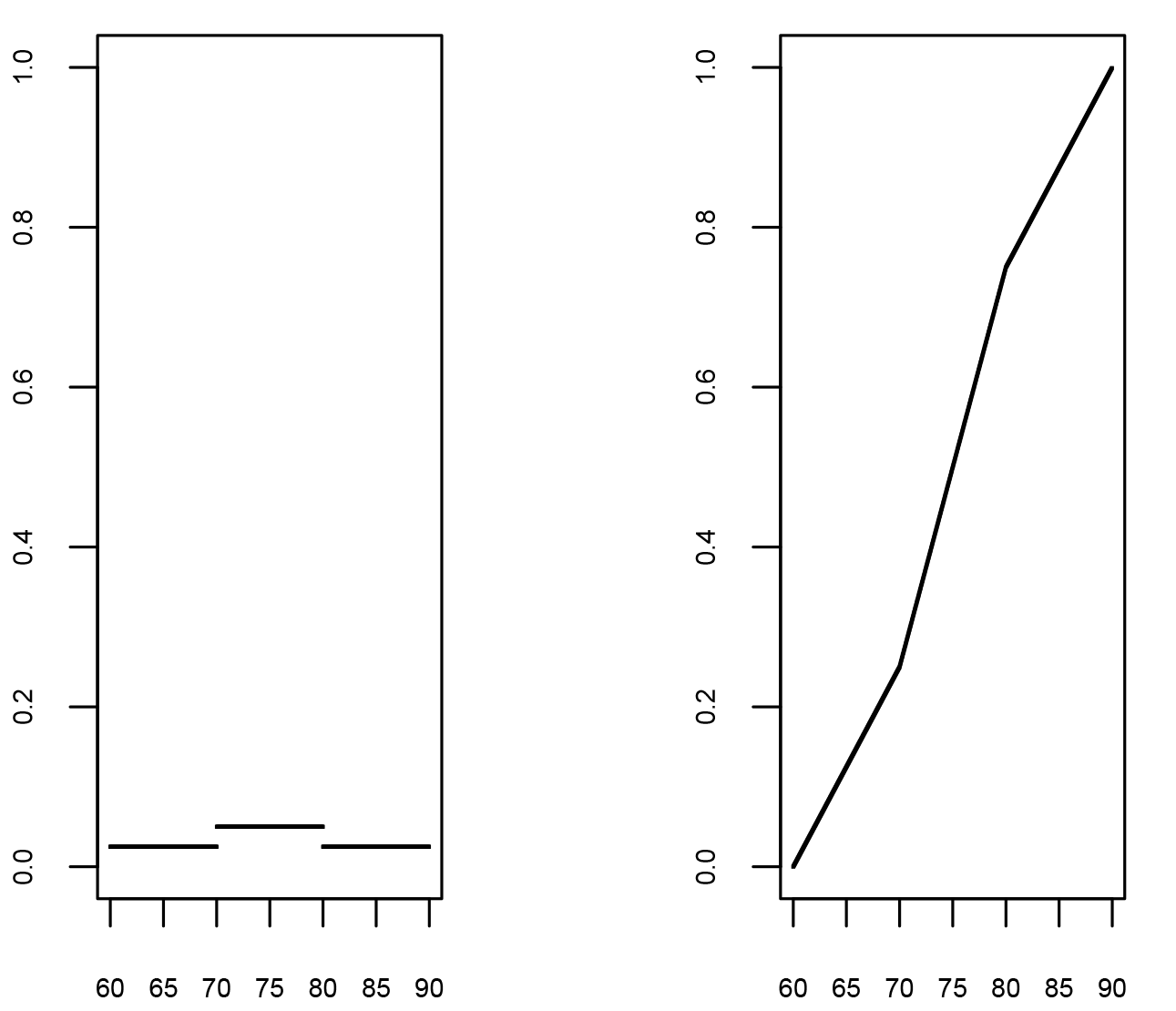}
	\caption{Density function and distribution function of $Y_{(0)}$, $Y_{(1)}$ under Scenario $(i)$}\label{fig:Fvera}
\end{figure}

The propensity score, in this case, is
 $$ p(x) = P(T=1 \vert X) = 0.75X+0.25(1-X)$$
Furthermore we have
$E[ Y \vert T=0]=72.5$ and $E[Y \vert T=1 ]=77.5$, so that $E[Y \vert T=1] - E[ Y \vert T=0]=5.0$ even if $ATE =0$. This is clearly
due to the confounding effect of $X$.

In Table \ref{tab:sim11} ($N=1000$, $n=1000$) and Table \ref{tab:sim12} ($N=1000$, $n=5000$) average, median and standard deviation of $\theta_{01}$, of $  \widehat{Q}_{j,n}(0.25)$, $ \widehat{Q}_{j,n}(0.50)$, and $ \widehat{Q}_{j,n}(0.75)$, $j=0,1$ are reported. The quantities are also reported for the estimator
$\widehat{\tau} = \sum_{i=1}^{n} y_i w^{(1)}_{i,n} - \sum_{i=1}^{n} y_i w^{(0)}_{i,n}$ of $ATE$ and for the ``naive'' mean difference between treated and untreated i.e.  $ n^{-1}\sum_{i=1}^{n} y_i - n^{-1}\sum_{i=1}^{n} y_i $.
\

In Tables \ref{tab:sim11a}-\ref{tab:sim11c} ($N=1000$, $n=1000$, $M=1000$, $m=100$) and Tables \ref{tab:sim12a}-\ref{tab:sim12c} ($N=1000$, $n=5000$, $M=1000$, $m=500$) the 95\% coverage probability and average length of confidence intervals for the Wilcoxon-type statistic $\hat{\theta}_{01,m}$ obtained via sampling and subsampling and for confidence bands for $F_1(y), F_0(y)$ and the percentage of rejection of the null hypothesis for the test of stochastic dominance are reported.

The results indicate that the Wilcoxon type statistic $\hat{\theta}_{01,m}$ and the estimated quantiles $ \widehat{Q}_{j,n}(p)$ perform well according to unbiasedness and dispersion. The sampling standard error of the Wilcoxon type statistic tends to be close to its theoretical one. The estimated ATE 
$\widehat{\tau} = \sum_{i=1}^{n} y_i w^{(1)}_{i,n} - \sum_{i=1}^{n} y_i w^{(0)}_{i,n}$ is equal to its ``true value'' (Tables \ref{tab:sim11} and \ref{tab:sim12}). The coverage probabilities of the confidence intervals are close to the nominal level 95\% (Tables \ref{tab:sim11a}-\ref{tab:sim11b} and \ref{tab:sim12a}-\ref{tab:sim12b}).  Finally, the percentage of rejection of the null hypothesis for the test of stochastic dominance is close to 0.05, being true the null hypothesis of no treatment effect in scenario $(i)$ i.e. $F_1=F_0$ (Tables \ref{tab:sim11c} and \ref{tab:sim12c}).

\begin{table}
\centering
\begin{tabular}{ccccc}
        \hline
        \hline
       \textit{ \textbf{Estimator}} & \textit{\textbf{Parameter value}} & \textit{\textbf{Average}} & \textit{\textbf{Median}} & \textit{\textbf{Standard Deviation}} \\
        \hline
          $\hat{\theta}_{01,m}$&0.50&0.50 &0.50 & 0.015\\
          $\widehat{Q}_{1,n}(0.25)$&70 &70.11 &70.20 &0.578    \\
       $\widehat{Q}_{1,n}(0.50)$&75 &75.37 &75.42 &0.318    \\
         $\widehat{Q}_{1,n}(0.75)$&80 &80.11 &80.15 & 0.158   \\
         $\widehat{Q}_{0,n}(0.25)$&70 &70.17 &70.16 &0.130     \\
       $\widehat{Q}_{0,n}(0.50)$&75 & 75.28&75.31 &0.307    \\
         $\widehat{Q}_{0,n}(0.75)$&80 &80.15 &80.03 &0.514    \\
           $\sum_{i=1}^{n} y_i w^{(1)}_{i,n} - \sum_{i=1}^{n} y_i w^{(0)}_{i,n} $&0&0.07&&0.411\\
           $ n^{-1}\sum_{i=1}^{n} y_i - n^{-1}\sum_{i=1}^{n} y_i $&0&5.04&&0.460\\
           \hline
      \end{tabular}
  \caption{Scenario $(i)$ - $N=1000$, $n=1000$, $\theta_{01}=0.50$ - sampling simulation results} \label{tab:sim11}
  \end{table}
  \begin{table}
 \vspace{3 mm}
 
\centering
\begin{tabular}{ccc}
        \hline
        \hline
   Parameter& \textit{\textbf{Coverage probability}} &\textit{ \textbf{Average length}}\\
        \hline
                    ${\theta}_{01,n}$ &0.95   &0.063 \\ 
          ${\theta}_{01,m}$ &0.94   &0.062   \\
        \hline
      \end{tabular}
  \caption{Scenario $(i)$ - $N=1000$, $n=1000$, $M=1000$, $m=100$, $\theta_{01}=0.500$} - Coverage probability and average length of confidence intervals for $\theta_{01}$ based on normal approximation and on subsampling (nominal level 0.95)
  \label{tab:sim11a}
  \end{table}
    \begin{table}
\centering
\begin{tabular}{ccc}
        \hline
        \hline
   \textit{\textbf{Parameter}}& \textit{\textbf{Coverage probability}} &\textit{ \textbf{Average length}}\\
        \hline
        $\underset{y}{sup}|{F}_{1,n}(y)-F_{1,n}(y)| $ &   0.98&0.137  \\
         $\underset{y}{sup}|{F}_{0,n}(y)-F_{0,n}(y)| $&   0.98&0.138  \\
        \hline
      \end{tabular}
  \caption{Scenario $(i)$ - $N=1000$, $n=1000$, $M=1000$, $m=100$, $\theta_{01}=0.50$} - Coverage probability and average length of confidence bands for $F_1, F_0$ (nominal level 0.95)
  \label{tab:sim11b}
  \end{table}
    \begin{table}
\centering
\begin{tabular}{cc}
        \hline
        \hline
\textit{\textbf{Test statistic}}&\textit{ \textbf{Rejection probability}}\\
        \hline
        $\Delta(y)$    &0.06  \\
        \hline
      \end{tabular}
  \caption{Scenario $(i)$ - $N=1000$, $n=1000$, $M=1000$, $m=100$, $\theta_{01}=0.50$} - Rejection probability of stochastic dominance test (nominal level 0.95)
  \label{tab:sim11c}
  \end{table}
\

\begin{table}
\centering
\begin{tabular}{ccccc}
        \hline
        \hline
       \textit{ \textbf{Estimator}} & \textit{\textbf{Parameter value}} & \textit{\textbf{Average}} & \textit{\textbf{Median}} & \textit{\textbf{Standard Deviation}} \\
        \hline
          $\hat{\theta}_{01,m}$&0.50&0.50 &0.50 & 0.007\\
          $\widehat{Q}_{1,n}(0.25)$&70 &69.92 &69.96&0.319    \\
       $\widehat{Q}_{1,n}(0.50)$&75 &74.98 &74.98 &0.168    \\
         $\widehat{Q}_{1,n}(0.75)$&80 &79.74 &79.75 & 0.069   \\
         $\widehat{Q}_{0,n}(0.25)$&70 &69.95 &69.97 &0.111     \\
       $\widehat{Q}_{0,n}(0.50)$&75 & 74.99&74.98 &0.146    \\
         $\widehat{Q}_{0,n}(0.75)$&80 &79.75 &79.77 &0.209    \\
           $\sum_{i=1}^{n} y_i w^{(1)}_{i,n} - \sum_{i=1}^{n} y_i w^{(0)}_{i,n} $&0&0.00&&0.184\\
           $ n^{-1}\sum_{i=1}^{n} y_i - n^{-1}\sum_{i=1}^{n} y_i $&0&4.97&&0.192\\
           \hline
      \end{tabular}
  \caption{Scenario $(i)$ - $N=1000$, $n=5000$, $\theta_{01}=0.50$ - sampling simulation results} \label{tab:sim12}
  \end{table}
  \begin{table}
\centering
\begin{tabular}{ccc}
        \hline
        \hline
   \textit{\textbf{Parameter}}& \textit{\textbf{Coverage probability}} &\textit{ \textbf{Average length}}\\
        \hline
                    ${\theta}_{01,n}$ &0.96   &0.028 \\ 
          ${\theta}_{01,m}$ &0.95   &0.027  \\
        \hline
      \end{tabular}
  \caption{Scenario $(i)$ - $N=1000$, $n=5000$, $M=1000$, $m=500$, $\theta_{01}=0.500$} - Coverage probability and average length of confidence intervals for $\theta_{01}$ based on normal approximation and on subsampling (nominal level 0.95)
  \label{tab:sim12a}
  \end{table}
  
\begin{table}
\centering
\begin{tabular}{ccc}
        \hline
        \hline
   \textit{\textbf{Parameter}}& \textit{\textbf{Coverage probability}} &\textit{ \textbf{Average length}}\\
        \hline
        $\underset{y}{sup}|{F}_{1,n}(y)-F_{1,n}(y)| $&   0.96&0.060  \\
         $\underset{y}{sup}|{F}_{0,n}(y)-F_{0,n}(y)| $ &   0.96&0.061  \\
        \hline
      \end{tabular}
  \caption{Scenario $(i)$ - $N=1000$, $n=5000$, $M=1000$, $m=500$, $\theta_{01}=0.50$} - Coverage probability and average length of confidence bands for $F_1, F_0$ (nominal level 0.95)
  \label{tab:sim12b}
  \end{table}

    \begin{table}
\centering
\begin{tabular}{cc}
        \hline
        \hline
\textit{\textbf{Test statistic}}&\textit{ \textbf{Rejection probability}}\\
        \hline
        $\Delta(y)$   &0.05 \\
        \hline
      \end{tabular}
  \caption{Scenario $(i)$ - $N=1000$, $n=5000$, $M=1000$, $m=500$, $\theta_{01}=0.50$} - Rejection probability of stochastic dominance test (nominal level 0.95)
  \label{tab:sim12c}
  \end{table}
\

In scenario $(ii)$ (presence of treatment effect), the potential outcome $Y_{(0)}$ is specified as in $( \ref{eq:noeffect} )$ with $j=0$. The potential outcome $Y_{(1)}$ is specified as
\begin{eqnarray}
Y_{(1)} = 75 + 10 \cdot X + U_{1} \label{eq:treat_effect}
\end{eqnarray}
\noindent  where $X$ has a Bernoulli distribution $X \sim Be(0.5)$ and $U_{0}$, $U_{1}$ have a Uniform distribution $U_1 \sim U[-10;10]$.
The r.v.s $X$, $U_{0}$, $U_{1}$ are mutually independent.

The exact distribution function of $Y_{(1)}$ is reported below
\begin{eqnarray}
F_{1}(y)=\left \{\begin{array}{ll}
0& y<65 \\
 \frac{y-65}{40} & 65 \leq y<75\\
 \frac{y-70}{20}  & 75 \leq y<85 \\
\frac{y-50}{40}    & 85 \leq y<95\\
1& y \geq 95 \\
\end {array} \right .  \label{eq:df_yeseffect}
\end{eqnarray}
\noindent and depicted in Fig. \ref{fig:Fvere06}.
\begin{figure}[htbp]
	\centering
	\includegraphics[height=5in, width=6in]{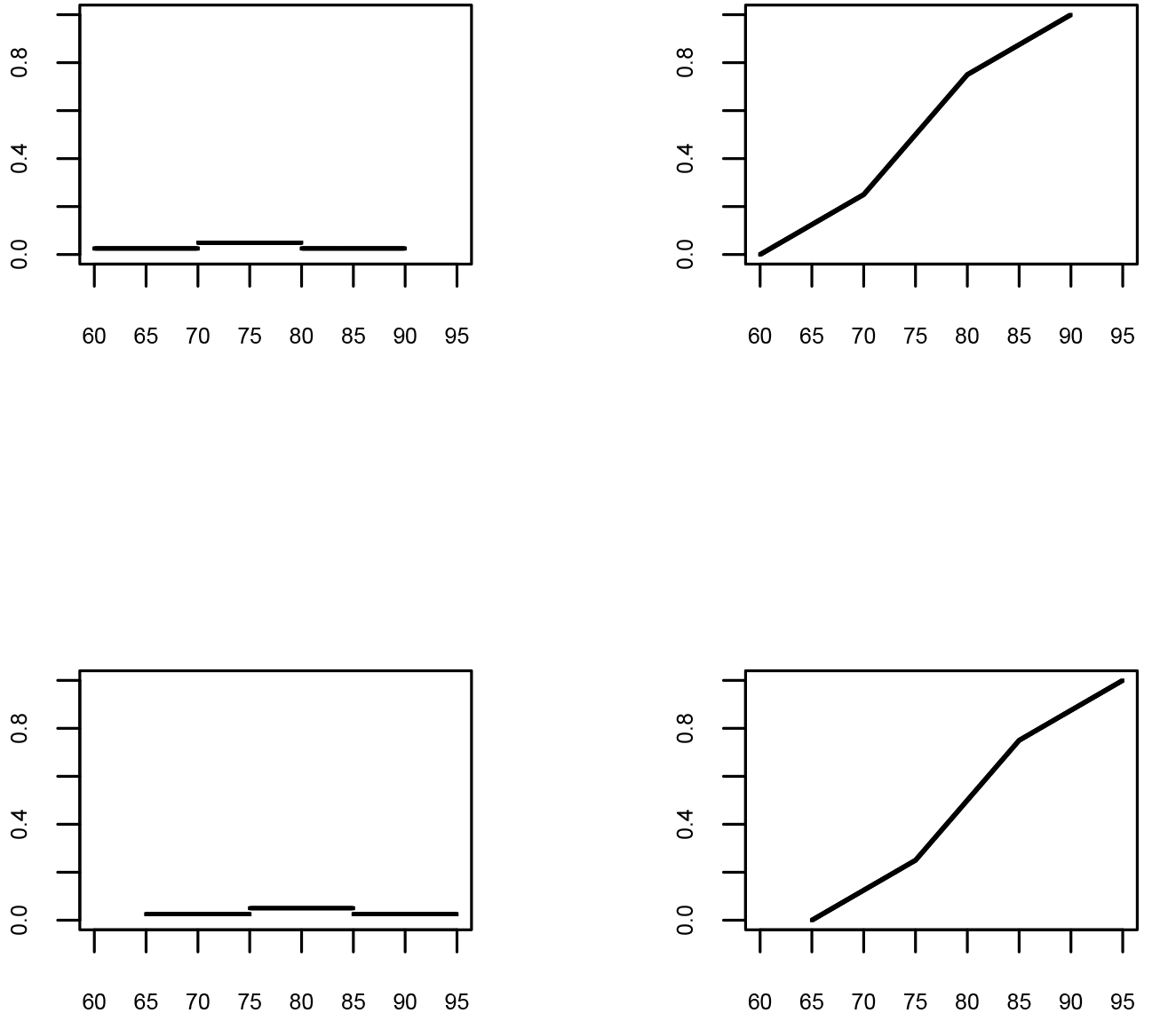}
	\caption{Density function and distribution function of $Y_{(0)}$ (top), $Y_{(1)} $ (bottom)
under Scenario $(ii)$} \label{fig:Fvere06}
\end{figure}

In scenario $(ii)$, we have  $\theta_{01} = 0.67$,
$E[Y_{(0)}]=75$, $E[Y_{(1)}]=80$, and then $ATE=5$. Furthermore, $F_{1}$ stochastically dominates $F_0$.

The propensity score is 
$$p(X) = P(T=1 \vert X)=0.25X+0.75(1-X)$$

\noindent so that $E[Y \vert T=0]=77.5$ and $E[ Y \vert T=1 ]=77.5$ even if $ATE \neq 0$.
As in scenario $(i)$, this is due to the confounding effect of $X$.

\

In Table \ref{tab:sim21} ($N=1000$, $n=1000$) and Table \ref{tab:sim22} ($N=1000$, $n=5000$) average, median and standard deviation of $\theta_{01}$, of $  \widehat{Q}_{j,n}(0.25)$, $ \widehat{Q}_{j,n}(0.50)$, and $ \widehat{Q}_{j,n}(0.75)$, $j=0,1$ are reported. The quantities are also reported for the estimator
$\widehat{\tau} = \sum_{i=1}^{n} y_i w^{(1)}_{i,n} - \sum_{i=1}^{n} y_i w^{(0)}_{i,n}$ of $ATE$ and for the ``naive'' mean difference between treated and untreated i.e.  $ n^{-1}\sum_{i=1}^{n} y_i - n^{-1}\sum_{i=1}^{n} y_i $.
\

In Tables \ref{tab:sim21a}-\ref{tab:sim21c} ($N=1000$, $n=1000$, $M=1000$, $m=100$) and Tables \ref{tab:sim22a}-\ref{tab:sim22c} ($N=1000$, $n=5000$, $M=1000$, $m=500$) the 95\% coverage probability and average length of confidence intervals for the Wilcoxon-type statistic $\hat{\theta}_{01,m}$ obtained via sampling and subsampling and for confidence bands for $F_1(y), F_0(y)$ and the percentage of rejection of the null hypothesis for the test of stochastic dominance are reported.

The results indicate that the Wilcoxon type statistic $\hat{\theta}_{01,m}$ and the estimated quantiles $ \widehat{Q}_{j,n}(p)$ perform well according to unbiasedness and dispersion. The sampling standard error of the Wilcoxon type statistic tends to be close to its theoretical one. The estimated ATE 
$\widehat{\tau} = \sum_{i=1}^{n} y_i w^{(1)}_{i,n} - \sum_{i=1}^{n} y_i w^{(0)}_{i,n}$ is equal to its ``true value'' (Tables \ref{tab:sim21} and \ref{tab:sim22}). The coverage probabilities of the confidence intervals are close to the nominal level 95\% (Tables \ref{tab:sim21a}-\ref{tab:sim21b} and \ref{tab:sim22a}-\ref{tab:sim22b}).  Finally, the percentage of rejection of the null hypothesis for the test of stochastic dominance is close to 0.05, being true the null hypothesis of no treatment effect. As in scenario $(ii)$ i.e. $F_1$ stochastically dominates $F_0$ the rejection probability is smaller than in  in scenario $(i)$ (Tables \ref{tab:sim21c} and \ref{tab:sim22c}).
\begin{table}
\centering
\begin{tabular}{ccccc}
        \hline
        \hline
       \textit{ \textbf{Estimator}} & \textit{\textbf{Parameter value}} & \textit{\textbf{Average}} & \textit{\textbf{Median}} & \textit{\textbf{Standard Deviation}} \\
        \hline
          $\hat{\theta}_{01,m}$&0.67&0.67 &0.67 & 0.012\\
          $\widehat{Q}_{1,n}(0.25)$&75 &74.70 &74.69&0.289    \\
       $\widehat{Q}_{1,n}(0.50)$&80 &79.73 &79.75 &0.350    \\
         $\widehat{Q}_{1,n}(0.75)$&85 &85.17 &84.92 & 0.669   \\
         $\widehat{Q}_{0,n}(0.25)$&70 &69.64 &70.03 &0.698     \\
       $\widehat{Q}_{0,n}(0.50)$&75 & 74.81&74.90 &0.355    \\
         $\widehat{Q}_{0,n}(0.75)$&80 &79.90 &79.77 &0.352    \\
           $\sum_{i=1}^{n} y_i w^{(1)}_{i,n} - \sum_{i=1}^{n} y_i w^{(0)}_{i,n} $&5&4.98&&0.415\\
           $ n^{-1}\sum_{i=1}^{n} y_i - n^{-1}\sum_{i=1}^{n} y_i $&5&-0.03&&0.033\\
           \hline
      \end{tabular}
  \caption{Scenario $(i)$ - $N=1000$, $n=1000$, $\theta_{01}=0.672$ - sampling simulation results} \label{tab:sim21}
  \end{table}
  
  \begin{table}
\centering
\begin{tabular}{ccc}
        \hline
        \hline
   \textit{\textbf{Parameter}}& \textit{\textbf{Coverage probability}} &\textit{ \textbf{Average length}}\\
        \hline
                    ${\theta}_{01,n}$ &0.97   &0.051 \\ 
          ${\theta}_{01,m}$ &0.96   &0.049  \\
        \hline
      \end{tabular}
  \caption{Scenario $(i)$ - $N=1000$, $n=1000$, $M=1000$, $m=100$, $\theta_{01}=0.672$} - Coverage probability and average length of confidence intervals for $\theta_{01}$ based on normal approximation and on subsampling (nominal level 0.95)
  \label{tab:sim21a}
  \end{table} 
\begin{table}
\centering
\begin{tabular}{ccc}
        \hline
        \hline
   \textit{\textbf{Parameter}}& \textit{\textbf{Coverage probability}} &\textit{ \textbf{Average length}}\\
        \hline
        $\underset{y}{sup}|{F}_{1,n}(y)-F_{1,n}(y)| $ &   0.97&0.138  \\
         $\underset{y}{sup}|{F}_{0,n}(y)-F_{0,n}(y)| $ &   0.96&0.137  \\
        \hline
      \end{tabular}
  \caption{Scenario $(i)$ - $N=1000$, $n=1000$, $M=1000$, $m=100$, $\theta_{01}=0.672$} - Coverage probability and average length of confidence bands for $F_1, F_0$ (nominal level 0.95)
  \label{tab:sim21b}
  \end{table}

    \begin{table}
\centering
\begin{tabular}{cc}
        \hline
        \hline
\textit{\textbf{Test statistic}}&\textit{ \textbf{Rejection probability}}\\
        \hline
        $\Delta(y)$    &0.00  \\
        \hline
      \end{tabular}
  \caption{Scenario $(i)$ - $N=1000$, $n=1000$, $M=1000$, $m=100$, $\theta_{01}=0.672$} - Rejection probability of stochastic dominance test (nominal level 0.95)
  \label{tab:sim21c}
  \end{table}

\begin{table}
\centering
\begin{tabular}{ccccc}
        \hline
        \hline
       \textit{ \textbf{Estimator}} & \textit{\textbf{Parameter value}} & \textit{\textbf{Average}} & \textit{\textbf{Median}} & \textit{\textbf{Standard Deviation}} \\
        \hline
          $\hat{\theta}_{01,m}$&0.67&0.67 &0.67 & 0.005\\
          $\widehat{Q}_{1,n}(0.25)$&75 &74.85 &74.88&0.145    \\
       $\widehat{Q}_{1,n}(0.50)$&80 &79.72 &79.76 &0.137    \\
         $\widehat{Q}_{1,n}(0.75)$&85 &84.89 &84.84 & 0.243   \\
         $\widehat{Q}_{0,n}(0.25)$&70 &69.91 &70.00 &0.257     \\
       $\widehat{Q}_{0,n}(0.50)$&75 & 74.94&74.97 &0.160    \\
         $\widehat{Q}_{0,n}(0.75)$&80 &79.76 &79.76&0.090    \\
           $\sum_{i=1}^{n} y_i w^{(1)}_{i,n} - \sum_{i=1}^{n} y_i w^{(0)}_{i,n} $&5&4.98&&0.174\\
           $ n^{-1}\sum_{i=1}^{n} y_i - n^{-1}\sum_{i=1}^{n} y_i $&5&-0.04&&0.430\\
           \hline
      \end{tabular}
  \caption{Scenario $(i)$ - $N=1000$, $n=5000$, $\theta_{01}=0.67$ - sampling simulation results} \label{tab:sim22}
  \end{table}
  \begin{table}
\centering
\begin{tabular}{ccc}
        \hline
        \hline
   Parameter& \textit{\textbf{Coverage probability}} &\textit{ \textbf{Average length}}\\
        \hline
                    ${\theta}_{01,n}$ &0.96   &0.023 \\ 
          ${\theta}_{01,m}$ &0.95   &0.022  \\
        \hline
      \end{tabular}
  \caption{Scenario $(i)$ - $N=1000$, $n=5000$, $M=1000$, $m=500$, $\theta_{01}=0.67$} - Coverage probability and average length of confidence intervals for $\theta_{01}$ based on normal approximation and on subsampling (nominal level 0.95)
  \label{tab:sim22a}
  \end{table}
  
\begin{table}
\centering
\begin{tabular}{ccc}
        \hline
        \hline
   Parameter& \textit{\textbf{Coverage probability}} &\textit{ \textbf{Average length}}\\
        \hline
        $\underset{y}{sup}|{F}_{1,n}(y)-F_{1,n}(y)| $&   0.97&0.060  \\
         $\underset{y}{sup}|{F}_{0,n}(y)-F_{0,n}(y)| $ &   0.97&0.061  \\
        \hline
      \end{tabular}
  \caption{Scenario $(i)$ - $N=1000$, $n=5000$, $M=1000$, $m=500$, $\theta_{01}=0.50$} - Coverage probability and average length of confidence bands for $F_1, F_0$ (nominal level 0.95)
  \label{tab:sim22b}
  \end{table}

    \begin{table}
\centering
\begin{tabular}{cc}
        \hline
        \hline
Test statistic&\textit{ \textbf{Rejection probability}}\\
        \hline
        $\Delta(y)$    &0.00  \\
        \hline
      \end{tabular}
  \caption{Scenario $(i)$ - $N=1000$, $n=5000$, $M=1000$, $m=500$, $\theta_{01}=0.50$} - Rejection probability of stochastic dominance test (nominal level 0.95)
  \label{tab:sim22c}
  \end{table}

\newpage

{\Large \noindent {\bf Appendix - Technical Lemmas and proofs}}

\begin{lemma}
\label{lemma1}
Under the assumptions of Th. $\ref{th_consist_sieve}$:
\begin{eqnarray}
\sup_{x} \left \vert \frac{1}{\widehat{p}_{j,n}(x)} - \frac{1}{{p}_{j} (x)} \right \vert \overset{p}{\rightarrow}0 \; \; {\mathrm{as}}  \; n \rightarrow \infty, \;\; j=1, \, 0.
\label{eq:e3}
\end{eqnarray}
\end{lemma}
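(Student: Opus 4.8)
The plan is to write the difference of reciprocals as a single fraction and to control its numerator and denominator separately, the only real work being a uniform lower bound on $\widehat{p}_{j,n}$ that holds with probability tending to one. First I would observe that, by the notational conventions $(\ref{eq:simp_01})$ and $(\ref{eq:simplif_not})$, one has $|\widehat{p}_{j,n}(x) - p_j(x)| = |\widehat{p}_n(x) - p(x)|$ for both $j=1,0$, so Theorem $\ref{th_consist_sieve}$ gives $\eta_n := \sup_x |\widehat{p}_{j,n}(x) - p_j(x)| \overset{p}{\to} 0$. Moreover, Assumption (iii) yields the deterministic bound $\inf_x p_j(x) \ge \delta$ for $j=1,0$, since $p_1(x)=p(x)\ge\delta$ and $p_0(x)=1-p(x)\ge\delta$.

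The key step is to transfer this lower bound to $\widehat{p}_{j,n}$. On the event $E_n = \{\eta_n < \delta/2\}$, which satisfies $P(E_n)\to 1$ by Theorem $\ref{th_consist_sieve}$, we have $\widehat{p}_{j,n}(x) \ge p_j(x) - \eta_n \ge \delta/2$ for every $x$. Hence, on $E_n$,
\begin{eqnarray}
\sup_x \left| \frac{1}{\widehat{p}_{j,n}(x)} - \frac{1}{p_j(x)} \right| = \sup_x \frac{|p_j(x) - \widehat{p}_{j,n}(x)|}{\widehat{p}_{j,n}(x)\, p_j(x)} \le \frac{\eta_n}{(\delta/2)\,\delta} = \frac{2}{\delta^2}\, \eta_n . \nonumber
\end{eqnarray}

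Finally I would conclude with the standard two-event bound: for any $\varepsilon>0$,
\begin{eqnarray}
P\left( \sup_x \left| \frac{1}{\widehat{p}_{j,n}(x)} - \frac{1}{p_j(x)} \right| > \varepsilon \right) \le P(E_n^{c}) + P\left( \frac{2}{\delta^2}\,\eta_n > \varepsilon \right) \longrightarrow 0 \nonumber
\end{eqnarray}
as $n\to\infty$, both terms vanishing because $\eta_n\overset{p}{\to}0$. There is no serious obstacle here; the single point requiring care is that $\widehat{p}_{j,n}(x)$, though strictly positive by construction (it is the image of a logistic link), is not a priori bounded away from zero uniformly in $x$, so one must work on the high-probability event $E_n$ to bound the denominator — this is precisely where the common support Assumption (iii) combines with the uniform consistency of Theorem $\ref{th_consist_sieve}$.
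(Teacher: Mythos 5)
Your proof is correct and follows essentially the same route as the paper: both arguments reduce the problem to the uniform consistency of $\widehat{p}_n$ from Theorem \ref{th_consist_sieve}, and both use the common support bound $p_j(x)\ge\delta$ together with a high-probability event on which $\widehat{p}_{j,n}$ is uniformly bounded away from zero (you fix the threshold $\delta/2$, the paper uses $\delta\epsilon$ and the bound $\delta(1-\epsilon)$, a purely cosmetic difference). Your treatment of both $j=1,0$ at once via $|\widehat{p}_{j,n}(x)-p_j(x)|=|\widehat{p}_n(x)-p(x)|$ is a tidy shortcut but not a different argument.
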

\noindent
\begin{proof}[{\bf Proof of Lemma \ref{lemma1}}]
 Take an arbitrary $0 < \epsilon <1$. Since $p_{1} (x) = p(x)$, $\widehat{p}_{1,n}(x) = \widehat{p}_{n}(x)$, we may write
\begin{eqnarray}
& \, & P \left ( \sup_x \left \vert   \frac{1}{\widehat{p}_{1,n}(x)} - \frac{1}{p_{1}(x)} \right \vert > \epsilon \right )
=  P \left ( \sup_x \frac{\left \vert \widehat{p}_{n}(x) - p(x) \right \vert}{\widehat{p}_{n}(x)} > \delta \epsilon \right ) \nonumber \\
\, & \leq &
P \left ( \sup_x \frac{\left \vert \widehat{p}_{n}(x) - p(x) \right \vert}{\widehat{p}_{n}(x)} > \delta \epsilon , \;
\sup_x \left \vert \widehat{p}_{n}(x) - p(x) \right \vert \leq \delta \epsilon \right ) +
P \left ( \sup_x \left \vert \widehat{p}_{n}(x) - p(x) \right \vert > \delta \epsilon \right ) \nonumber \\
\, & \leq &
P \left ( \sup_x \frac{\left \vert \widehat{p}_{n}(x) - p(x) \right \vert}{p(x) - \delta \epsilon} > \delta \epsilon \right ) +
P \left ( \sup_x \left \vert \widehat{p}_{n}(x) - p(x) \right \vert > \delta \epsilon \right ) \nonumber \\
\, & \leq &
P \left ( \sup_x \frac{\left \vert \widehat{p}_{n}(x) - p(x) \right \vert}{\delta (1- \epsilon )} > \delta \epsilon \right ) +
P \left ( \sup_x \left \vert \widehat{p}_{n}(x) - p(x) \right \vert > \delta \epsilon \right ) \nonumber \\
\, & \leq &
2 P \left ( \sup_x \left \vert \widehat{p}_{n}(x) - p(x) \right \vert > \delta^2 \epsilon (1- \epsilon ) \right )
\rightarrow 0 \;\; {\mathrm{as}} \; n \rightarrow \infty .
\label{eq:e4}
\end{eqnarray}
Since $( \ref{eq:e4} )$ holds for every positive $\epsilon$ small enough, the lemma is proved as $j=1$. The case $j=0$ is similar.
\end{proof}

\begin{lemma}
\label{lemma2}
Under the assumptions of Th. $\ref{th_consist_sieve}$:
\begin{eqnarray}
\frac{1}{n} \sum_{i=1}^{n} \frac{I_{(T_i=j)}}{\widehat{p}_{j,n} (x_i) } \overset{p}{\rightarrow} 1 , \;\;\; {\mathrm{as}} \; n \rightarrow \infty ,
\; j=1, \, 0.
\label{eq:e7}
\end{eqnarray}
\end{lemma}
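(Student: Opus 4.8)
The plan is to split $n^{-1}\sum_{i=1}^{n} I_{(T_i=j)}/\widehat{p}_{j,n}(x_i)$ into a piece that is killed by the uniform consistency of the estimated inverse propensity score (Lemma \ref{lemma1}) and a piece that is a genuine i.i.d. average to which the weak law of large numbers applies. Concretely, I would write
\begin{eqnarray}
\frac{1}{n}\sum_{i=1}^{n} \frac{I_{(T_i=j)}}{\widehat{p}_{j,n}(x_i)}
= \frac{1}{n}\sum_{i=1}^{n} I_{(T_i=j)}\left( \frac{1}{\widehat{p}_{j,n}(x_i)} - \frac{1}{p_{j}(x_i)} \right)
+ \frac{1}{n}\sum_{i=1}^{n} \frac{I_{(T_i=j)}}{p_{j}(x_i)} \nonumber
\end{eqnarray}
and treat the two summands separately.

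For the first summand, since $0 \le I_{(T_i=j)} \le 1$, its absolute value is bounded above by $\sup_x \bigl| \widehat{p}_{j,n}(x)^{-1} - p_{j}(x)^{-1} \bigr|$, which converges to $0$ in probability by Lemma \ref{lemma1}; note that it is precisely the uniformity in $x$ there that lets us dispose of the randomness of the $x_i$'s. For the second summand, the terms $I_{(T_i=j)}/p_{j}(x_i)$ are i.i.d. copies of $I_{(T=j)}/p_{j}(X)$, which is bounded by $1/\delta$ by the common support assumption (iii), hence integrable; moreover, conditioning on $X$ and using $E[I_{(T=j)}\mid X] = p_{j}(X)$,
\begin{eqnarray}
E\left[ \frac{I_{(T=j)}}{p_{j}(X)} \right] = E_x\left[ \frac{1}{p_{j}(X)} E\left[ I_{(T=j)} \,\Big\vert\, X \right] \right] = E_x[1] = 1 . \nonumber
\end{eqnarray}
Hence the second summand converges in probability (indeed almost surely, by the strong law) to $1$.

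Combining the two bounds via Slutsky's lemma yields $n^{-1}\sum_{i=1}^{n} I_{(T_i=j)}/\widehat{p}_{j,n}(x_i) \overset{p}{\rightarrow} 1$ for $j=1,\,0$. I do not expect a serious obstacle here: the only point that requires care is that the error from replacing $\widehat{p}_{j,n}$ by $p_{j}$ must be controlled uniformly over the (random) covariate values, and this is exactly the content of Lemma \ref{lemma1}, whose hypotheses are inherited from Theorem \ref{th_consist_sieve}.
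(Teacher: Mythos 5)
Your proposal is correct and follows essentially the same route as the paper's proof: the identical decomposition into the estimation-error term (bounded uniformly via Lemma \ref{lemma1}) plus the i.i.d. average $n^{-1}\sum_i I_{(T_i=j)}/p_j(x_i)$, whose limit $1$ is computed by conditioning on $X$ and applying the strong law of large numbers. No meaningful differences from the paper's argument.
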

\begin{proof}[{\bf Proof of Lemma \ref{lemma2}}]
Consider the case $j=1$. First of all, we have
\begin{eqnarray}
\frac{1}{n} \sum_{i=1}^{n} \frac{I_{(T_i=1)}}{\widehat{p}_{1,n} (x_i) } =
\frac{1}{n} \sum_{i=1}^{n} \left ( \frac{1}{\widehat{p}_{n} (x_i)}-\frac{1}{{p}(x_i)} \right )
I_{(T_i=1)} + \frac{1}{n} \sum_{i=1}^{n} \frac{I_{(T_i=1)}}{p(x_i) }  .
\label{eq:e8}
\end{eqnarray}
\noindent
Next, by Lemma $\ref{lemma1}$ it is easy to see that
\begin{eqnarray}
\left \vert \frac{1}{n} \sum_{i=1}^{n} \left ( \frac{1}{\widehat{p}_{n} (x_i)} - \frac{1}{{p}(x_i)} \right ) I_{(T_i=1)}\right \vert
& \leq & \underset{1 \le i \le n}{\max} \left \vert \frac{1}{\widehat{p}_{n} (x_i)}-\frac{1}{{p}(x_i)} \right \vert
\frac{1}{n} \sum_{i=1}^{n}I_{(T_i=1)} \nonumber \\
\, & \leq & \underset{x}{\sup} \left \vert \frac{1}{\widehat{p}_{n} (x)}-\frac{1}{{p}(x)} \right \vert  \overset{p}{\rightarrow}0
\;\; {\mathrm{as}} \; n \rightarrow \infty .
\label{eq:e9}
\end{eqnarray}
Furthermore, from the Strong Law of Large Numbers for sequences of {\em i.i.d.} r.v.s it is seen that
\begin{eqnarray}
\frac{1}{n} \sum_{i=1}^{n} \frac{I_{(T_i=1)}}{p(x_i)}\overset{a.s.}{\rightarrow} E \left [ \frac{I_{(T_i=1)}}{p(x_i) } \right ] = E_x \left [
 \frac{1}{p(x_i)} E \left [ \left . I_{(T_i=1)} \right \vert x_i \right ] \right ] = 1 \label{eq:e10}
\end{eqnarray}
\noindent as $n \rightarrow \infty$. From $( \ref{eq:e9} )$ and $( \ref{eq:e10} )$ the first convergence in $( \ref{eq:e8} )$ follows. Convergence in
the case $j=0$ is proved in a similar way.
\end{proof}

\begin{lemma}
\label{lemma3}
Consider the  ``pseudo-estimator'' of $F_{j}(y)$:
\begin{eqnarray}
\widetilde{F}_{j,n}(y) = \frac{1}{n} \sum_{i=1}^{n} \frac{I_{(T_i=j)}}{p_{j} (x_i) }I_{(Y_i \le y)} , \;\; j=1, \, 0.
\label{eq:e11}
\end{eqnarray}
Under the assumptions of Th. $\ref{th_consist_sieve}$:
\begin{eqnarray}
\underset{y}{\sup} \left \vert \widehat{F}_{j,n}(y)- \widetilde{F}_{j,n}(y) \right \vert
\overset{p} {\rightarrow} 0 \;\; {\mathrm{as}} \; n \rightarrow \infty , \; j=1, \, 0.
\label{eq:lem_eqiv}
\end{eqnarray}
\end{lemma}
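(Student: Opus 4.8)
The plan is to route the argument through the Horvitz--Thompson estimator $\widehat{F}^{HT}_{j,n}$ of $(\ref{eq:e5b})$, using that the H\'{a}jek estimator is obtained from it by dividing by the normalizing average whose convergence to $1$ is the content of Lemma $\ref{lemma2}$. Setting $D_{j,n}=\frac{1}{n}\sum_{k=1}^{n}I_{(T_k=j)}/\widehat{p}_{j,n}(x_k)$, one has the exact identity $\widehat{F}_{j,n}(y)=\widehat{F}^{HT}_{j,n}(y)/D_{j,n}$, with $D_{j,n}\overset{p}{\rightarrow}1$ by Lemma $\ref{lemma2}$. I would then control $\sup_y|\widehat{F}_{j,n}(y)-\widetilde{F}_{j,n}(y)|$ by separating the discrepancy between $\widehat{F}^{HT}_{j,n}$ and $\widetilde{F}_{j,n}$ (same outcome weights, estimated versus true propensity score) from the error due to $D_{j,n}\neq 1$.

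For the first of these, the key estimate is that, for every $y$,
\begin{eqnarray}
\left| \widehat{F}^{HT}_{j,n}(y)-\widetilde{F}_{j,n}(y) \right| = \left| \frac{1}{n}\sum_{i=1}^{n}\left( \frac{1}{\widehat{p}_{j,n}(x_i)}-\frac{1}{p_j(x_i)} \right) I_{(T_i=j)} I_{(Y_i\le y)} \right| \le \sup_x \left| \frac{1}{\widehat{p}_{j,n}(x)}-\frac{1}{p_j(x)} \right| , \nonumber
\end{eqnarray}
where the final bound is free of $y$ because $\frac{1}{n}\sum_i I_{(T_i=j)}\le 1$; by Lemma $\ref{lemma1}$ it tends to $0$ in probability, so $\sup_y|\widehat{F}^{HT}_{j,n}(y)-\widetilde{F}_{j,n}(y)|\overset{p}{\rightarrow}0$. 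I would also note the deterministic uniform bound $\sup_y\widetilde{F}_{j,n}(y)\le\delta^{-1}$, which follows from the common-support Assumption $(iii)$ (so that $1/p_j(x)\le\delta^{-1}$).

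Finally I would use the decomposition
\begin{eqnarray}
\widehat{F}_{j,n}(y)-\widetilde{F}_{j,n}(y) = \frac{\widehat{F}^{HT}_{j,n}(y)-\widetilde{F}_{j,n}(y)}{D_{j,n}} + \widetilde{F}_{j,n}(y)\,\frac{1-D_{j,n}}{D_{j,n}} \nonumber
\end{eqnarray}
to obtain, after taking the supremum over $y$,
\begin{eqnarray}
\sup_y\left| \widehat{F}_{j,n}(y)-\widetilde{F}_{j,n}(y) \right| \le \frac{1}{D_{j,n}}\,\sup_y\left| \widehat{F}^{HT}_{j,n}(y)-\widetilde{F}_{j,n}(y) \right| + \frac{|1-D_{j,n}|}{\delta\,D_{j,n}} . \nonumber
\end{eqnarray}
On the event $\{D_{j,n}\ge 1/2\}$, whose probability tends to $1$, the right-hand side is a sum of two terms, each the product of a factor bounded by $2$ and a factor that tends to $0$ in probability; hence the right-hand side is $o_p(1)$, which gives $(\ref{eq:lem_eqiv})$. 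The case $j=0$ is identical. The only point demanding attention is uniformity in $y$, and here it comes for free: the Lemma $\ref{lemma1}$ bound and $D_{j,n}$ do not depend on $y$, and the sole $y$-dependent quantity, $\widetilde{F}_{j,n}(y)$, is uniformly bounded by $\delta^{-1}$. I therefore anticipate no real obstacle; the statement is a routine plug-in consistency result, of the kind used in Section $\ref{sec:F1F0asymp}$.
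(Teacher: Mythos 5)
Your proof is correct and follows essentially the same route as the paper: the identical decomposition of $\widehat{F}_{j,n}(y)-\widetilde{F}_{j,n}(y)$ into a propensity-score estimation error in the numerator (controlled uniformly in $y$ via Lemma \ref{lemma1}) plus a normalization error coming from $D_{j,n}^{-1}-1$ (controlled via Lemma \ref{lemma2}), with uniformity in $y$ obtained from $y$-free bounds. The only differences are cosmetic refinements on your side — bounding $\widetilde{F}_{j,n}(y)$ deterministically by $\delta^{-1}$ and working on the event $\{D_{j,n}\ge 1/2\}$ to handle the factor $D_{j,n}^{-1}$ explicitly — which make the argument slightly tidier than the paper's but do not change its substance.
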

\begin{proof}[{\bf Proof of Lemma \ref{lemma3}}]
Consider first the case $j=1$. From
\begin{eqnarray}
\widehat{F}_{1,n}(y)  - \widetilde{F}_{j,n}(y) & = &
\left ( \frac{1}{n} \sum_{i=1}^{n} \frac{I_{(T_i=1)}}{\widehat{p}_{1,n} (x_i)} \right )^{-1}
\left ( \frac{1}{n} \sum_{i=1}^{n} \left (  \frac{1}{\widehat{p}_{1,n} (x_i)} - \frac{1}{p_{1} (x_i)} \right )
I_{(T_i=1)} I_{(Y_i \le y)} \right ) \nonumber \\
& + &
\left \{ \left ( \frac{1}{n} \sum_{i=1}^{n} \frac{I_{(T_i=1)}}{\widehat{p}_{1,n} (x_i)} \right )^{-1} -1 \right \}
\left ( \frac{1}{n} \sum_{i=1}^{n} \frac{I_{(T_i=1)}}{p_{1} (x_i)} I_{(Y_i \le y)} \right ) \nonumber
\end{eqnarray}
\noindent it is seen that
\begin{eqnarray}
\underset{y}{\sup} \left \vert \widehat{F}_{1,n}(y)  - \widetilde{F}_{1}(y)) \right \vert & \leq &
\sup_{x} \left \vert \frac{1}{\widehat{p}_{n}(x)} - \frac{1}{p(x)} \right \vert \nonumber \\
& + &
\left \vert \left ( \frac{1}{n} \sum_{i=1}^{n} \frac{I_{(T_i=1)}}{\widehat{p}_n(x_i)} \right )^{-1} - 1 \right \vert
\left ( \frac{1}{n} \sum_{i=1}^{n} \frac{I_{(T_i=1)}}{p(x_i)} \right ) . \label{eq:interm_lemma3}
\end{eqnarray}
\noindent Proof immediately follows from Lemmas $\ref{lemma2}$, $\ref{lemma3}$. The case $j=0$ is similar.
\end{proof}

\begin{lemma}
\label{lemma4}
Consider again the  ``pseudo-estimators'' $( \ref{eq:e11} )$. Under the assumptions of Lemma $\ref{lemma4}$:
\begin{eqnarray}
\underset{y}{\sup} \left \vert \widetilde{F}_{j,n}(y)-F_j(y) \right \vert
& \overset{a.s.}{\rightarrow} & 0 \;\; {\mathrm{as}} \;  n  \rightarrow \infty , \; j=1, \, 0.
\label{eq:e19}
\end{eqnarray}
\end{lemma}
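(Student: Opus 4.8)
The plan is to recognize $\widetilde{F}_{j,n}(\cdot)$ as a sample average of i.i.d.\ random functions and then run the classical Glivenko--Cantelli argument. First I would set $\xi_i(y) = \frac{I_{(T_i=j)}}{p_j(x_i)} I_{(Y_i \le y)}$, so that $\widetilde{F}_{j,n}(y) = \frac1n \sum_{i=1}^n \xi_i(y)$. Since the triples $(X_i,T_i,Y_i)$ are i.i.d., for each fixed $y$ the summands $\xi_i(y)$ are i.i.d., nonnegative, and bounded by $1/\delta$ by the common-support assumption (iii), hence integrable. The computation behind $(\ref{eq:e1})$ — which uses only unconfoundedness (i) and $E[I_{(T=j)}\vert x] = p_j(x)$ — gives $E[\xi_i(y)] = F_j(y)$, so the strong law of large numbers yields $\widetilde{F}_{j,n}(y) \overset{a.s.}{\rightarrow} F_j(y)$ for every fixed $y$. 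Likewise, letting $y\to+\infty$ gives $\widetilde{F}_{j,n}(+\infty) = \frac1n\sum_{i=1}^n \frac{I_{(T_i=j)}}{p_j(x_i)} \overset{a.s.}{\rightarrow} 1$ (this is precisely $(\ref{eq:e10})$), while $\widetilde{F}_{j,n}(-\infty)=0$; and $y\mapsto\widetilde{F}_{j,n}(y)$ is non-decreasing because the weights are nonnegative.

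Next I would upgrade pointwise convergence to uniform convergence by the usual monotonicity sandwich. Given $\epsilon>0$, I choose a finite grid $-\infty = t_0 < t_1 < \cdots < t_K = +\infty$ with $F_j(t_k^-) - F_j(t_{k-1}) \le \epsilon$ for every $k$; this is possible for an arbitrary d.f.\ by isolating the at most countably many jumps of size exceeding $\epsilon$. The SLLN applied simultaneously at the finitely many points $t_1,\dots,t_{K-1}$ and at their left limits (note $\widetilde{F}_{j,n}(t^-) = \frac1n\sum_i \frac{I_{(T_i=j)}}{p_j(x_i)} I_{(Y_i<t)}$ has mean $F_j(t^-)$, so the law applies there as well) produces a single event of probability one on which all these finitely many convergences hold. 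On that event, for large $n$ and any $y\in[t_{k-1},t_k)$, monotonicity gives $\widetilde{F}_{j,n}(t_{k-1}) \le \widetilde{F}_{j,n}(y) \le \widetilde{F}_{j,n}(t_k^-)$ and $F_j(t_{k-1}) \le F_j(y) \le F_j(t_k^-)$, so $|\widetilde{F}_{j,n}(y)-F_j(y)|$ is at most the grid gap $\epsilon$ plus the (eventually arbitrarily small) errors at the grid points. Taking the supremum over $y$ and then $\epsilon\downarrow 0$ along a countable sequence yields $(\ref{eq:e19})$.

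There is no genuine obstacle here — the statement is just the Glivenko--Cantelli theorem for the weighted (inverse-propensity) empirical measure — and the argument for $j=0$ is identical. The only points I would flag as needing a little care are that $F_j$ need not be continuous, so the grid must be chosen to control its jumps, and that $\widetilde{F}_{j,n}$ is not literally a distribution function for finite $n$ (its total mass $\widetilde{F}_{j,n}(+\infty)$ is random and differs from $1$), so one cannot invoke the textbook statement verbatim; since that total mass converges a.s.\ to $1$ the monotonicity sandwich above nonetheless goes through unchanged, and as a by-product the same reasoning re-proves Lemma $\ref{lemma2}$.
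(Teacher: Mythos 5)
Your proposal is correct and follows essentially the same route as the paper's own proof: pointwise almost-sure convergence of $\widetilde{F}_{j,n}(y)$ to $F_j(y)$ via the strong law (using $E\bigl[\tfrac{I_{(T=j)}}{p_j(x)}I_{(Y\le y)}\bigr]=F_j(y)$), followed by the classical Glivenko--Cantelli partition-and-monotonicity sandwich over a finite grid controlling the increments (and jumps) of $F_j$. If anything, your write-up is slightly more explicit than the paper's on two minor points --- applying the SLLN also at the left limits of the grid points and noting that $\widetilde{F}_{j,n}$ is monotone but not a proper d.f., with total mass tending to $1$ --- so no gap remains.
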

\begin{proof}[{\bf Proof of Lemma \ref{lemma4}}]
The result can be shown by standard arguments. Consider first the case $j=1$.
From the Strong Law of Large Numbers for {\em i.i.d.} r.v.s, we have:
\begin{eqnarray}
\widetilde{F}_{1,n}(y) \overset{a.s.} {\rightarrow} F_1(y) \;\; {\mathrm{as}} \; n{\rightarrow} \infty , \; \forall \, y \in \ \mathbb{R} .
\label{eq:e12}
\end{eqnarray}
Moreover, on the basis of the properties of $F_1(y)$ (monotone non decreasing, continuous to the left, with total variation equal to 1),
for every positive  $\epsilon$ there exists a partition of $\mathbb{R}$
\begin{eqnarray}
-\infty, < z_0 < z_1 \dots < z_{k-1} < z_k=+\infty \nonumber
\end{eqnarray}
\noindent  such that
\begin{eqnarray}
F_1(z^-_{j+1})-F_1(z_{j})<\frac{\epsilon}{2} \;\; \forall \, j=0, \, 1, \, \dots , \, k-1
\nonumber
\end{eqnarray}
For each $z_{j} < y <z_{j+1}$ it is then:
\begin{eqnarray}
 \widetilde{F}_{1,n}(z_{j} ) \leq \widetilde{F}_{1,n}(y) \leq \widetilde{F}_{1,n}(z^-_{j+1} )  , \;\;\;
 F_1(z_j) \leq F_1(y) \leq F_1(z^-_{j+1})  \nonumber
\end{eqnarray}
\noindent for all $j=0, \, 1, \, \dots, \, k-1$,
and this implies that
\begin{eqnarray}
\widetilde{F}_{1,n}(z_{j} )-F_1(z^-_{j+1}) \leq \widetilde{F}_{1,n}(y)-F_1(y) \leq \widetilde{F}_{1,n}(z^-_{j+1}) -F(z_j)
\;\; \forall \, z_j \le y < z_{j+1} .
\label{eq:e14}
\end{eqnarray}
Moreover, for every $z_j \le y < z_{j+1}$ it is seen that
\begin{eqnarray}
\widetilde{F}_{1,n}(y)-F_1(y) \geq \widetilde{F}_{1,n}(z_j)-F_1(z_j)+F_1(z_j)-F_1(z^-_{j+1})  \geq
\widetilde{F}_{1,n}(z_j)-F_1(z_j)-\frac{\epsilon}{2} ,
\label{eq:e15}
\end{eqnarray}
\noindent and similarly:
\begin{eqnarray}
\widetilde{F}_{1,n}(y)-F_1(y) \leq \widetilde{F}_{1,n}(z^-_{j+1})-F_1(z^-_{j+1}) +F_1(z^-_{j+1}) -F_1(z_j)  \leq
\widetilde{F}_{1,n}(z^-_{j+1})-F_1(z^-_{j+1}) +\frac{\epsilon}{2} .
\label{eq:e16}
\end{eqnarray}
From inequalities $( \ref{eq:e15} )$, $( \ref{eq:e16} )$ it follows that
\begin{eqnarray}
\underset{y}{\sup} |\widetilde{F}_{1,n}(y)-F_1(y)|  \leq \underset{0 \leq j \leq  k-1}{\max} \left \{ \left \vert
\widetilde{F}_{1,n}(z_{j})-F_1(z_j) \right \vert + \left \vert F_1(z^-_{j+1})-F_1(z^-_{j+1}) \right \vert
\right \} + \epsilon
\label{eq:e17}
\end{eqnarray}
As $n \rightarrow \infty$, the Strong Law of Large Numbers implies that
\begin{eqnarray}
 \underset{j}{\max} \left \{ \left \vert \widetilde{F}_{1,n}(z_{j})-F_1(z_j) \right \vert
 + \left \vert F_1(z^-_{j+1})-F_1(z^-_{j+1}) \right \vert \right \} \overset{a.s.}{\rightarrow} 0
\nonumber
\end{eqnarray}
\noindent and since $ \epsilon>0$ can be made arbitrarily small, conclusion $( \ref{eq:e19} )$ follows. The case $j=0$ is dealt with similarly.
\end{proof}

\begin{proof}[{\bf Proof of Proposition \ref{gliv-cant}}]
Immediate consequence of Lemmas $\ref{lemma3}$, $\ref{lemma4}$.
\end{proof}

\begin{proof}[{\bf Proof of Proposition \ref{weak-conv}}]
Using $( \ref{eq:emp_proc_interm2} )$ and the uniform boundedness on compact sets of $y$s of the $o_p (1)$ term, it is enough
to prove that the sequence of stochastic processes
\begin{eqnarray}
\widetilde{W}_n (y) = \left[ \begin{array}{cc}
\widetilde{W}_{1n} (y) \\
\widetilde{W}_{0n} (y)
\end{array} \right]
=
\left[ \begin{array}{cc}
\frac{1}{\sqrt{n}} \sum_{i=1}^{n} Z_{1,i} \\
\frac{1}{\sqrt{n}} \sum_{i=1}^{n} Z_{0,i}
\end{array} \right]
, \;\; y \in \mathbb{R} .
\label{eq:emp_proc_interm3}
\end{eqnarray}
\noindent converges weakly to the Gaussian process $W( \cdot )$. Observing that $E[ \widetilde{W}_{1n} (y) ] =
E[ \widetilde{W}_{0n} (y) ] =0$, and
using Theorem 2.11.1 in \cite{vaartwellner96} (p. 206), we have to prove
point-wise convergence of covariance functions and asymptotic equicontinuity.

\noindent {\bf 1.} {\em Convergence of covariance}. Consider first the term $C( \widetilde{W}_{1n} (y) , \, \widetilde{W}_{1n} (t) )$.
Since $Z_{1,i}$s are {\em i.i.d.} r.v.s, and taking into account that $p_1(x) = p(x)$, we may write
\begin{eqnarray}
C( \widetilde{W}_{1n} (y) , \, \widetilde{W}_{1n} (t) ) & = &
E[W_{1,n}(y ) W_{1,n}(t)] \nonumber \\
\, & = &
E \left [ \left \{ \frac{I_{(T_=1)}}{p_1(x)}(I_{(Y \leq y)}-F_1(y))-\frac{F_1(y \vert x)-F_1 (y )}{p_{1}(x)}(I_{(T_i=1)}-p(x)) \right \} \right .
\nonumber \\
& \, & \left . \left  \{ \frac{I_{(T_i=1)}}{p_1(x)}(I_{(Y \leq t)}-F_1(t))-\frac{F_1 (t \vert x)- F_1(t)}{p_1(x)}(I_{(T_i=1)}-p(x)) \right \} \right ]
\nonumber \\
& = & E \left [ \frac{I_{(T=1)}}{p(x)^2}(I_{(Y_{(1)} \leq y \land t)}-F_1(y) I_{(Y_{(1)} \leq t)}-F_1(t) I_{(Y_{(1)} \leq y)}+F_1(y) F_1(t))
\right ] \nonumber \\
& - & E \left [ \frac{F_1(y \vert x)-F_1(y)}{p(x)^2}I_{(T=1)}(I_{(T=1)}-p(x))(I_{(Y_{(1)} \leq t)}-F_1(t)) \right ] \nonumber \\
&- & E \left [\frac{F_1(t \vert x) - F_1(t)}{p(x)^2}I_{(T=1)}(I_{(T=1)}-p(x))(I_{(Y_{(1)} \le y)}-F_1(y)) \right ] \nonumber \\
& + & E \left [ \frac{F_1(y \vert x)-F_1(y)}{p(x)}\frac{F_1(t \vert x)-F_1(t)}{p(x)}(I_{(T_i=1)}-p(x))^2 \right ] \nonumber \\
& = & E \left [ \frac{1}{p(x)}(F_1{(y \land t \vert x)}-F_1(y \vert x) F_1(t)-F_1(y)F_1{(t \vert x)}+ F_1(y)F_1(t)) \right ] \nonumber \\
& - & E \left [ \left ( \frac{1}{p(x)}-1 \right ) (F_1(y \vert x)-F_1(y))(F_1(t \vert x)-F_1(t))   \right  ] \nonumber \\
& - & E \left [ \left ( \frac{1}{p(x)}-1 \right ) (F_1(y \vert x)-F_1(y))(F_1(t \vert x)-F_1(t))  \right  ] \nonumber \\
& + & E \left [ \left ( \frac{1}{p(x)}-1 \right ) (F_1(y \vert x)-F_1(y))(F_1(t \vert x)-F_1(y))   \right ] \nonumber \\
& = & E \left [ \frac{1}{p(x)} \left ( F_1{(y \land t \vert x)}-F_1(y \vert x)F_1(t \vert x)+(F_1(y \vert x)-F_1(y))
(F_1(t \vert x)-F_1(t)) \right ) \right  ] \nonumber \\
& - & E \left [ \left (\frac{1}{p(x)}-1 \right ) (F_1(y \vert x)- F_1(y))(F_1(t \vert x)-F_1(t)) \right ] \nonumber \\
& = & E_x \left [ \frac{1}{p(x)}(F_1{(y \land t \vert x)}-F_1(y \vert x)F_1(t \vert x)) \right  ] \nonumber \\
& + & E_x \left [(F_1(y \vert x)-F_1(y))(F_1(t \vert x)F_1(t)) \right ] \nonumber \\
& = & C_{11} (y, \, t) , \nonumber
\label{eq:e33}
\end{eqnarray}
\noindent and similarly
\begin{eqnarray}
C( \widetilde{W}_{0n} (y) , \, \widetilde{W}_{0n} (t) ) = C_{00} (y, \, t) . \nonumber
\end{eqnarray}
As far as the cross-covariance terms are concerned, it is immediate to see that $C_{01} (y, \, t) = C_{10} (t, \, y)$. Furthermore:
\begin{eqnarray}
C( \widetilde{W}_{1n} (y) , \, \widetilde{W}_{0n} (t) ) & = & E [ \widetilde{W}_{1n} (y)  \widetilde{W}_{0n} (t) ] \nonumber \\
& = &
E \left [ \left (\frac{I_{(T=1)}}{p(x)}(I_{(Y \leq y)}-F_1(y))-\frac{F_1( y \vert x)-F_1(y)}{p(x)}(I_{(T=1)}-p(x))\right ) \right .
\nonumber \\
& \, & \left . \left  ( \frac{I_{(T=0)}}{1-p(x)}(I_{(Y \leq t)}-F_0(t))-\frac{F_0(t \vert x)-F_0(t)}{1-p(x)}(I_{(T=1)}-p(x))\right )
\right  ] \nonumber \\
& = & E \left [ \left ( \frac{I_{(T=1)}}{p(x)}(I_{(Y_{(1)} \leq y)}-F_1(y \vert x))+(F_1(y \vert x)-F_1(y)) \right ) \right .
\nonumber \\
& \, & \left . \left ( \frac{I_{(T=0)}}{1-p(x)}(I_{(Y_{(0)} \leq t)}-F_0(t \vert x))+(F_0(t \vert x)-F_0(t))
\right ) \right  ] \nonumber \\
& = & E \left [ \frac{I_{(T=1)}}{p(x)}(I_{(Y_{(1)} \leq y)}-F_1(y \vert x))\frac{I_{(T=0)}}{1-p(x)}(I_{(Y_{(0)} \leq t)}
-F_0 (t \vert x)) \right ] \nonumber \\
& + & E \left [ \frac{I_{(T=1)}}{p(x)}(I_{(Y_{(1)} \leq y)}-F_1(y \vert x))(F_0(t \vert x)-F_0(t))\right ]
\nonumber \\
& + & E \left [ \frac{I_{(T=0)}}{1-p(x)}(I_{(Y_{(0)} \leq t)}-F_0 (t \vert x)) (F_1(y \vert x)-F_1(y))
\right ] \nonumber \\
& + & E \left [ (F_1(y \vert x)-F_1(y))(F_0(t \vert x)-F_0(t)) \right ] \nonumber \\
& = & E \left [ (F_1(y \vert x)-F_1(y))(F_0(t \vert x)-F_0(t)) \right ] \nonumber \\
& = & E \left [ F_1(y \vert x) F_0(t \vert x) \right ] - F_1(y)F_0(t) , \nonumber
\end{eqnarray}
\noindent and this ends the ``covariance part'' of the proof.

\noindent {\bf 2.} {\em Asymptotic equicontinuity}. Consider the {\em i.i.d.} r.v.s $( \ref{eq:def-z} )$, and suppose $y <t$. Then:
\begin{eqnarray}
\, & \, & E \left [ \left ( Z_{1,i} (t)- Z_{1,i}(y) \right )^2 \right ] =
\frac{1}{n} \left \{ E[Z_{1,i}(t)^2]+E[Z_{1,i}(y)^2] -2E[Z_{1,i}(t) Z_{1,i} (y) ] \right \} \nonumber \\
& = & \frac{1}{n} \left \{ C_{11}(t,t)+C_{11}(y,y)-2C_{11}(y,t) \right \} \nonumber \\
& = &  \frac{1}{n} \left \{ E \left [ \frac{1}{p(x)} \left ( F_1(t \vert x)
(1-F_1(t \vert x))+F_1(y \vert x)) (1-F_1(y \vert x) \right )
- 2 \left ( F_1(y \vert x)-F_1(t \vert x)F_1(y \vert x) \right ) \right ] \right . \nonumber \\
& + & \left . E \left [ \left ( F_1(t \vert x)-F_1(t) \right )^2 + \left ( F_1(y \vert x)-F_1(y) \right )^2 -
2 \left ( F_1(t \vert x)-F_1(t) \right ) \left ( F_1(y \vert x)-F_1(y) \right ) \right ] \right \} \nonumber  \\
& = & \frac{1}{n} \left \{ E \left [ \frac{1}{p(x)} \left ( F_1 (t \vert x ) - F_1(t \vert x)^2 +
F_1( y \vert x)- F_1 (y \vert x )^2 - 2F_1( y \vert x ) +2 F_1 (y \vert x) F_1(t \vert x) \right ) \right ] \right . \nonumber \\
& + & \left . E \left [ \left ( \left (F_1(t \vert x)-F_1(t) \right ) -
\left ( F_1(y \vert x)-F_1(y) \right ) \right )^2 \right ] \right \} \nonumber \\
& = & \frac{1}{n} \left \{ E \left [ \frac{1}{p(x)}\left ( F_1(t \vert x)-F_1(y \vert x) \right )- \left (
F_1(t \vert x)-F_1(y \vert x) \right )^2 \right ] \right . \nonumber \\
& + & E \left . \left [ \left ( \left ( F_1(t \vert x)-F_1(y \vert x) \right )- \left ( F_1(t)-F_1(y) \right )
\right )^2 \right ] \right \} \nonumber \\
& = & \frac{1}{n} \left \{ E \left [ \frac{1}{p(x)} \left (F_1(t \vert x)-F_1(y \vert x) \right )
\left (1- \left (F_1(t \vert x)-F_1(y \vert x) \right ) \right ) \right ] \right . \nonumber  \\
& + & E \left . \left [ \left ( F_1(t \vert x)-F_1(y \vert x) \right )^2
+ \left ( F_1(t)-F_1(y) \right )^2 -2 \left ( F_1(t \vert x)-F_1(y \vert x) \right ) \left (F_1(t)-F_1(y)
\right ) \right ] \right \} \nonumber  \\
& = & \frac{1}{n} \left \{ E \left [ \frac{1}{p(x)} \left (F_1(t \vert x)-F_1(y \vert x) \right )
\left ( 1-\left (F_1(t \vert x)- F_1(y \vert x) \right ) \right ) \right ] \right . \nonumber  \\
& + & E \left . \left [ \left ( F_1(t \vert x)-F_1(y \vert x) \right )^2 \right ]
- \left (F_1(t)-F_1(y) \right )^2 \right \} \nonumber  \\
& \leq & \frac{1}{n} \left \{ E \left [ \frac{1}{p(x)} \left (F_1(t \vert x)-F_1(y \vert x ) \right ) \right ]
+ E \left [ \left ( F_1(t \vert x)-F_1(y \vert x) \right )^2 \right ] \right \} \nonumber  \\
& \leq & \frac{1}{n} \left \{ \frac{1}{\delta} E \left [F_1(t \vert x)-F_1(y \vert x) \right ] + E \left [ \left (
F_1 (t \vert x)-F_1(y \vert x ) \right ) \right ] \right \} \nonumber \\
& = & \frac{1}{n} \left ( 1+\frac{1}{\delta} \right ) \left (F_1(t)-F_1(y) \right ) .
\label{eq:e35c}
\end{eqnarray}
A similar result is obtained as $t<y$, as well as when $j=0$, so that inequalities:
\begin{eqnarray}
E \left [ \left ( Z_{j,i} (y) - Z_{j,i} (t) \right )^2 \right ] & \le &
\frac{1}{n} \left ( 1+\frac{1}{\delta} \right ) \left \vert F_j (t)- F_j (y) \right \vert , \;\; j=1, \, 0  \label{eq:35d}
\end{eqnarray}
\noindent hold true.

Since $F_j(y)$ is continuous (uniformly, being monotonic and bounded), from
\begin{eqnarray}
\left \vert Z_{1,i} (y) \right \vert & = &
\frac{1}{\sqrt{n}} \left \vert \frac{I_{(T_{i}=j)}}{p_{j} (x_i)} (I_{(Y_{i} \leq y)}-F_j(y)) -
\frac{F_j (y \vert x_i) - F_j (y)}{p(x_i)}(I_{(T_i=j)}-p_{j} (x_i)) \right \vert \nonumber \\
& \leq & \frac{1}{\sqrt{n}} \left \{ \left \vert \frac{1}{p_{j} (x_i)} \right \vert +
\left \vert \frac{1}{p_{j}(x_i)} \right \vert \right \} \nonumber \\
& \leq & \frac{2}{\delta}n^{-\frac{1}{2}} , \;\; j=1, \, 0
\label{eq:e35e}
\end{eqnarray}
it follows that, for every positive $\eta$:
\begin{eqnarray}
\sum_{i=1}^{n} E \left [ \left \| Z_{j,i} (y) \right \|^2
I_{\left (  \| Z_{j,i} (y)  \| > \eta \right )} \right ]
 \leq \frac{4}{\delta^2} I_{\left ( \frac{2}{\delta} > \eta \sqrt{n} \right )}
\rightarrow  0 \;\; {\mathrm{as}} \;  n \rightarrow \infty . \nonumber
\end{eqnarray}

Next, define the (random) pseudometric:
\begin{eqnarray}
d_n(t, \, y) = \sum_{i=1}^{n} \left \{ \left ( Z_{1,i} (t) - Z_{1,i} (y) \right )^2 +
\left ( Z_{0,i} (t) - Z_{0,i} (y) \right )^2 \right \} . \nonumber
\end{eqnarray}
From the Strong Law of Large Numbers it is seen that
\begin{eqnarray}
d_n(t, \, y) & \overset{a.s.}{\rightarrow} & E \left [ \left ( Z_{1,i} (t) - Z_{1,i} (y) \right )^2 +
\left ( Z_{0,i} (t) - Z_{0,i} (y) \right )^2 \right ] \nonumber \\
\, & \leq & c \left ( \left \vert F_1(t) - F_1(y) \right \vert  +
\left \vert F_0(t) - F_0(y) \right \vert \right )
\label{eq:e35f}
\end{eqnarray}
\noindent with $ c =  ( 1+ \delta^{-1} )$.

Denote now by $N(\epsilon,\mathbb{R},d_n)$ the smallest number of intervals of $[y, \, t]$ that cover the real line,
and  such that $d_n(t,y)<\epsilon$ .
By (\ref{eq:e35f}) it follows that, with probability 1, for $n$ large  enough,
\begin{eqnarray}
d_n(t,y) \leq  c \left ( \vert F_1 (t)-F_1 (y) \vert + \vert F_0 (t)-F_0 (y) \vert \right ) . \nonumber
\end{eqnarray}
\noindent Hence, with probability 1, the number $N(\epsilon,\mathbb{R},d_n)$ is bounded by $\frac{K}{\epsilon}$,
$K$ being an appropriate constant. As a consequence, with probability 1, for $n$ large enough, we have:
\begin{eqnarray}
\int_{0}^{h}{\sqrt{\log N(\epsilon,\mathbb{R},d_n)} \, d \epsilon} & \leq & \int_{0}^{h}
{\sqrt{\log\frac{K}{\epsilon}} \, d\epsilon} \nonumber \\
& \leq & K h - \int_{0}^{h} {\sqrt{\log \epsilon} \, d\epsilon}\rightarrow 0 \;\; {\mathrm{as}} \; h \downarrow 0 .
\nonumber
\end{eqnarray}
In view of Theorem 2.11.1 in \cite{vaartwellner96} (p. 206), this completes the proof.
\end{proof}

\begin{proof}[{\bf Proof of Proposition \ref{continuous-trajectories}}]
Let $Q_j(u)=F_j^{-1}(u)  = \inf \{y : \; F_j(y) \geq u \}$, $j=1, \, 0$. Then, $W_j ( \cdot )$ possesses continuous trajectories almost surely if $B_j
(u)=W_1(Q(u))$ possesses continuous trajectories almost surely. From the inequality (consequence of of proof of Proposition $\ref{weak-conv})$:
\begin{eqnarray}
E \left [ (W_1(t)-W_1(y))^2 \right ] \leq c \vert F_j(t) - F_j(y) \vert , \nonumber
\end{eqnarray}
\noindent $c$ being an appropriate constant, it follows that
\begin{eqnarray}
E \left [ (B_j(u)-B_j(v))^2 \right ] \leq c \vert u-v \vert  \;\; \forall \, u, \,v \in (0,1) \label{eq:ineq-lead}
\end{eqnarray}
\noindent
The continuity of the trajectories of $B_j ( \cdot )$ follows from $( \ref{eq:ineq-lead} )$ and formula $(6)$ in
 \cite{leadweis69}.
\end{proof}

\begin{proof}[{\bf Proof of Proposition \ref{asymptotics_wilcoxon}}]
First of all, using an integration by parts we have
\begin{eqnarray}
\widehat{\theta}_{01} - \theta_{01} & = & \int_{\mathbb{R}} \widehat{F}_{0,n}(y) \, d \widehat{F}_{1,n}(y)
- \int_{\mathbb{R}} F_{0}(y) \, dF_{1}(y) \nonumber \\
& = & \int_{\mathbb{R}} \widehat{F}_{0,n}(y) \, d [ \widehat{F}_{1,n}(y) - F_1(y) ]
+ \int_{\mathbb{R}} \left ( \widehat{F}_{0,n}(y) - F_0(y) \right ) \, d F_{1}(y) \nonumber \\
& = & \int_{\mathbb{R}} \left ( \widehat{F}_{0,n}(y) - F_0(y) \right ) \, d [ \widehat{F}_{1,n}(y) - F_1(y) ]
+ \left [
F_0(y) ( \widehat{F}_{1,n}(y) - F_1(y)) \right ]_{-\infty}^{+\infty} \nonumber \\
& - & \int_{\mathbb{R}} \left ( \widehat{F}_{1,n}(y) - F_1(y) \right ) \,  dF_0(y)
+ \int_{\mathbb{R}} \left ( \widehat{F}_{0,n}(y) - F_0(y) \right ) \, dF_1(y) \nonumber
\end{eqnarray}
\noindent and hence
\begin{eqnarray}
\sqrt{n} ( \widehat{\theta}_{01} - \theta_{01} ) = \int_{\mathbb{R}} W_{0,n}(y) \,
d \left [ n^{-1/2} W_{1,n}(y) \right ]
+ \int_{\mathbb{R}} W_{0,n}(y) \, dF_{1}(y)
- \int_{\mathbb{R}} W_{1,n}(y) \, dF_{0}(y)
\label{eq:e44}
\end{eqnarray}
\noindent where $W_{j,n}(y)=\sqrt{n} ( \widehat{F}_{j,n}(y)- F_j (y))$, $j=1$, $0$.

Now, if $F_0(y),F_1(y)$ are continuous, the limiting process $ W=[ W_1, \, W_0]^{\prime}$ possesses trajectories that are continuous (and bounded) with probability 1,
so that it is concentrated on $C( \overline{\mathbb{R}})^2$, that is separable and complete if equipped with the $sup$-norm. Using then  the Skorokhod Representation Theorem (cfr. \cite{billingsley99}, p. 70), there exist processes $ \widetilde{W}_n = [ \widetilde{W}_{1,n} , \, \widetilde{W}_{0,n} ]^{\prime}$, $n \geq 1$, and
$ \widetilde{W} = [ \widetilde{W}_{1} , \, \widetilde{W}_{0} ]^{\prime}$, defined on a probability space
$(\widetilde{\Omega}, \, \widetilde{\mathcal{F}} , \, \widetilde{P})$ such that
\begin{eqnarray}
\widetilde{W}_n \overset{d}{=} W_n \; \forall \, n \geq 1 , \;\; \widetilde{W} \overset{d}{=} W
\label{eq:equal_distr_skor}
\end{eqnarray}
\noindent and
 \begin{eqnarray}
 \sup_y \left \vert \widetilde{W}_{j,n} (y) - \widetilde{W}_{j} (y) \right \vert \rightarrow 0 \;\; {\mathrm{as}} \; n \rightarrow \infty , \;
 j=1, \, 0, \; a.s. - \widetilde{P}
 \label{eq:as_skor}
 \end{eqnarray}
\noindent where the symbol $\overset{d}{=}$ denotes equality in distribution.

From $( \ref{eq:equal_distr_skor}  )$ and $( \ref{eq:e44} )$, the relationship
\begin{eqnarray}
\sqrt{n} ( \widehat{\theta}_{01} - \theta_{01} ) \stackrel{d}{=} \int_{\mathbb{R}} \widetilde{W}_{0,n}(y) \,
d \left [ n^{-1/2} \widetilde{W}_{1,n}(y) \right ]
+ \int_{\mathbb{R}} \widetilde{W}_{0,n}(y) \, dF_{1}(y)
- \int_{\mathbb{R}} \widetilde{W}_{1,n}(y) \, dF_{0}(y)
\label{eq:same_distrib}
\end{eqnarray}
\noindent follows.

The terms appearing in the r.h.s. of $( \ref{eq:same_distrib} )$ can be handled separately. First of all, we have
\begin{eqnarray}
 \int_{\mathbb{R}} \widetilde{W}_{0,n}(y) \, dF_{1}(y) =
 \int_{\mathbb{R}} \left ( \widetilde{W}_{n} (y) - \widetilde{W}_{0}(y) \right ) \, d F_1(y)
 + \int_{\mathbb{R}}  \widetilde{W}_{0}(y) \, d F_1(y) ,
 \nonumber
 \end{eqnarray}
\noindent and since
 \begin{eqnarray}
 \left \vert \int_{\mathbb{R}} \left ( \widetilde{W}_{0,n}(y) - \widetilde{W}_{0}(y) \right ) \, dF_{1}(y) \right \vert
 \leq \sup_y \left \vert \widetilde{W}_{0,n}(y) - \widetilde{W}_{0}(y) \right \vert \rightarrow 0
 \;\; {\mathrm{as}} \; n \rightarrow \infty \;\; {\mathrm{a.s.}} -  \widetilde{P} ,
 \nonumber
 \end{eqnarray}
\noindent we easily obtain
 \begin{eqnarray}
 \int_{\mathbb{R}} \widetilde{W}_{0,n}(y) \, dF_{1}(y) \rightarrow \int_{\mathbb{R}} \widetilde{W}_{0}(y) \, dF_{1}(y)
 \;\; {\mathrm{as}} \; n \rightarrow \infty \;\; {\mathrm{a.s.}} -  \widetilde{P}
 \label{eq:conv1_wilcox}
 \end{eqnarray}
\noindent and similarly
 \begin{eqnarray}
 \int_{\mathbb{R}} \widetilde{W}_{1,n}(y) \, dF_{0}(y) \rightarrow \int_{\mathbb{R}} \widetilde{W}_{1}(y) \, dF_{0}(y)
 \;\; {\mathrm{as}} \; n \rightarrow \infty \;\; {\mathrm{a.s.}} -  \widetilde{P} .
 \label{eq:conv2_wilcox}
 \end{eqnarray}

Finally, for every integer $n$,  $n^{-1/2} \tilde{W}_{1,n}(y)$ is a bounded variation function, with total variation $\leq 2$, a.s.-$\widetilde{P}$, and since
the trajectories of the process $\widetilde{W}_{1}$ are continuous and bounded we may write
\begin{eqnarray}
 n^{-1/2} \widetilde{W}_{1,n}(y) \rightarrow 0 \;\; {\mathrm{as}} \; n \rightarrow \infty , \; a.s. - \widetilde{P} .
 \label{eq:convmis0}
\end{eqnarray}

Relationship $( \ref{eq:convmis0} )$ the signed measure induced by $n^{-1/2} \widetilde{W}_{1,n} $ converges weakly to a measure identically equal to zero.
Hence:
\begin{eqnarray}
 \left \vert
 \int_{\mathbb{R}} \widetilde{W}_{0,n}(y) \, d \left ( n^{-1/2} \widetilde{W}_{1,n}(y) \right ) \right \vert
 & \leq & \left \vert \int_{\mathbb{R}} \widetilde{W}_{0}(y) \, d \left ( n^{-1/2} \widetilde{W}_{1,n}(y) \right ) \right \vert \nonumber \\
 & + & \left \vert \int_{\mathbb{R}} \left ( \widetilde{W}_{0,n}(y) - \widetilde{W}_{0}(y) \right ) \,
 d \left ( n^{-1/2} \widetilde{W}_{1,n}(y) \right ) \right \vert \nonumber \\
 & \leq &  \underset{(a)}{\left \vert \int_{\mathbb{R}} \widetilde{W}_{0}(y) \, d \left ( n^{-1/2} \widetilde{W}_{1,n}(y) \right )
 \right \vert } \nonumber \\
 & + & 2 \underset{(b)}{\sup_{y} \left \vert \widetilde{W}_{0,n}(y) - \widetilde{W}_{0}(y) \right \vert } \nonumber \\
 & \rightarrow & 0  \;\;  {\mathrm{as}} \; n \rightarrow \infty , \; a.s. - \widetilde{P}
 \label{eq:e49}
\end{eqnarray}
\noindent where the term $(a)$ goes to zero according to the Helly-Bray theorem ($\widetilde{W}_0 $ is continuous and bounded a.s. $-\widetilde{P}$),
and the term $(b)$ goes to zero according to the Skorokhod Representation Theorem.

From $( \ref{eq:conv1_wilcox} )$, $( \ref{eq:conv2_wilcox} )$, and $( \ref{eq:e49} )$ it follows that:
\begin{eqnarray}
 & \, & \int_{\mathbb{R}} \widetilde{W}_{0,n}(y) \, d \left( n^{-1/2} \widetilde{W}_{1,n}(y) \right )
 + \int_{\mathbb{R}} \widetilde{W}_{0,n}(y) \, dF_1(y) - \int_{\mathbb{R}} \widetilde{W}_{1,n}(y) \, dF_0(y) \nonumber \\
 & \, & \rightarrow \int_{\mathbb{R}} \widetilde{W}_{0}(y) \, dF_1(y)
 - \int_{\mathbb{R}} \widetilde{W}_{1}(y) \, dF_0(y)  \;\; {\mathrm{as}} \; n \rightarrow \infty , \; a.s. - \widetilde{P}
\end{eqnarray}
\noindent which is equivalent to:
\begin{eqnarray}
\sqrt{n} ( \widehat{\theta}_{01}-\theta_{01} ) \stackrel{d}{\rightarrow}
\int_{\mathbb{R}} W_{0}(y) \, dF_1(y)  - \int_{\mathbb{R}} W_{1}(y) \, dF_0(y)  \;\; {\mathrm{as}} \; n \rightarrow \infty  .
\label{eq:e51}
\end{eqnarray}

The r.h.s. of $( \ref{eq:e51} )$ is a linear functional of a Gaussian process with continuous and bounded trajectories,
so that it possesses Gaussian distribution with zero expectation and variance
\begin{eqnarray}
V = V_1+V_2-2V_3
\label{eq:e52}
\end{eqnarray}
\noindent where
\begin{eqnarray}
V_1 & = &
\int_{\mathbb{R}^2} E[ W_{0}(y) \, W_{0}(t)] \, dF_1(y) \, dF_1(t) \label{eq:V1} ,  \\
V_2 & = & \int_{\mathbb{R}^2} E[ W_{1}(y) \, W_{1}(t)] \, dF_0(y) \, dF_0(t) ,  \label{eq:V2} \\
V_3 & = & \int_{\mathbb{R}^2} E[ W_{0}(y) \, W_{1}(t)] \, dF_1(y) \, dF_0(t) . \label{eq:V3}
\end{eqnarray}

The terms $V_1$ - $V_3$ in $( \ref{eq:V1} )$ - $( \ref{eq:V3} )$ can be written more compactly. Using the quantities
$\gamma_{10} (x)$, $\gamma_{01} (x)$ defined in $( \ref{eq:def_gamma} )$, it is not difficult to see that
\begin{eqnarray}
V_1 & = & \int_{\mathbb{R}^2} E_x \left [ \frac{1}{p(x)} (F_1 (y \land t \vert x ) - F_1 (y \vert x) \, F_1 (t \vert x) )
\, dF_0(y) \, dF_0(t) \right ] \nonumber  \\
 & + &
\int_{\mathbb{R}^2} E_x \left [ (F_1(y \vert x) - F_1(y)) (F_1( t \vert x ) - F_1(t)) \right ] \, dF_0(y) \, dF_0(t) \nonumber \\
 & = &
 E_x \left [ \frac{1}{p(x)} \left \{ \int_{\mathbb{R}^2} \left ( E \left [ \left . \frac{I_{(T=1)}}{p(x)}I_{(Y \leq y\land t)} 
\right  \vert x \right ]
 \, dF_0(y) \, dF_0(t) \right ) \right . \right . \nonumber \\
 & - & \left .  \left . \left ( \int_{\mathbb{R}} \left ( E \left [ \left . \frac{I_{T=1}}{p(x)}I_{(Y \leq y)} \right \vert x \right ]
 \right ) \, dF_0(y) \right )^2 \right \} \right ]
 +  E_x[(\gamma_{10}(x)-\theta_{10})^2] \nonumber \\
 & = &
E_x \left [ \frac{1}{p(x)} \left \{ E \left [  \frac{I_{(T=1)}}{p(x)}
 \int_{\mathbb{R}^2} I_{( y \land t \ge Y)} \, dF_0(y) \, dF_0(t) \vert x \right ]
 - \left ( E \left [ \frac{I_{(T=1)}}{p(x)} \int_{\mathbb{R}} I_{(y \ge Y)} dF_0(y) \right ] \right )^2
 \right \} \right ] \nonumber \\
 & + & V_x ( \gamma_{10}(x)) \nonumber \\
& = &
 E_x \left [ \frac{1}{p(x)} \left \{ E \left [ \left (1-F_0(Y_1) \right )^2 \vert x \right ]
 - \left ( E \left [1-F_0(Y_1) \vert x \right ] \right )^2 \right \} \right ] + V_x(\gamma_{10}(x)) \nonumber \\
 & = &
E_x \left [ \frac{1}{p(x)} V \left (F_0(Y_1) \vert x \right ) \right ] + V_x(\gamma_{10}(x))
\label{eq:e53}
\end{eqnarray}
In the same way, it is seen that:
\begin{eqnarray}
V_2 = E_x \left [ \frac{1}{1-p(x)} V \left ( F_1(Y_0) \vert x \right ) \right ] + V_x(\gamma_{01}(x))
\label{eq:e53b}
\end{eqnarray}
\noindent and
\begin{eqnarray}
V_3 & = & \int_{\mathbb{R}^2} E \left [ W_0(y) \, W_1(t) \right ] \, dF_1(y) \, dF_0(t)
\nonumber \\
& = & E_x \left [ (\gamma_{10}(x)-\theta_{10}) (\gamma_{01}(x)-\theta_{01}) \right ]
\label{eq:e54}
\end{eqnarray}
From $( \ref{eq:e53} )$ - $( \ref{eq:e54} )$, $( \ref{eq:e55} )$ easily follows.
\end{proof}

\end{document}